\theoremstyle{plain}
\newtheorem{theorem}{Theorem}[section]
\newtheorem{proposition}[theorem]{Proposition}
\newtheorem*{theorem*}{Theorem}
\theoremstyle{definition}
\def\vv<#1>{\langle#1\rangle}
\def\R{\mathbb{R}}
\def\I{\mathbb{I}}
\def\Lag{\mathcal{L}}
\newcommand{\g}{\mathfrak{g}}
\def\se{se}
 \newcommand{\ad}{\mbox{$\text{\upshape{ad}}$}}
\newcommand{\vecu}{{\bf u}}
\newcommand{\In}{\mathcal{I}}
\newcommand{\Jn}{\mathcal{K}}
\def\F{\mathcal{F}}
\def\M{\mathcal{M}}
\def\B{\mathcal{B}}
\def\vecom{\boldsymbol{\omega}}
\def\sech{\operatorname{sech}}
\title{The Hydrodynamic Chaplygin Sleigh}
\author{Yuri N. Fedorov \\
 Department de Matem\'atica Aplicada I, \\
Universitat Politecnica de Catalunya, \\
Barcelona, E-08028 Spain \\
e-mail: Yuri.Fedorov@upc.edu \\
and \\
Luis C. Garc\'ia-Naranjo  \\
Section de Mathematiques \\
Station 8, EPFL \\
CH-1015 Lausanne, Switzerland \\
e-mail: luis.garcianaranjo@epfl.ch}
\begin{document}
\maketitle

\begin{abstract}
We consider the motion of rigid bodies in a potential fluid subject to certain nonholonomic constraints and show
that it is described by  Euler--Poincar\'e--Suslov equations.
In the 2-dimensional case, when the constraint is realized by a blade attached to the body,
the system provides a hydrodynamic generalization of the Chaplygin sleigh,
whose dynamics are studied in detail.
Namely, the equations of motion are integrated explicitly and the asymptotic behavior of the system is determined.
It is shown how the presence of the fluid brings new features to such a behavior.
\end{abstract}


\section{Introduction and outline}

This paper considers the motion of rigid bodies in a potential fluid
in the presence of certain kind of linear nonholonomic constraints. This is motivated by studying the dynamics of underwater vehicles
with large fins, which, in the first approximation,
impose restrictions on the velocity of their central points relative to the fluid.

The free motion of such vehicles in a 3-dimensional potential fluid
can be described by the finite-dimensional Kirchhoff equations, in which the action of the
fluid is reflected via the tensor of adjoint masses that depends solely on the body shape. In the presence of a large fin this tensor
becomes almost singular: one of its eigenvalues is very large. 
As was shown in
\cite{Koz2}, see also \cite{ArnoldIII}, section 6.4, in the limit the dynamics of the body with the fin
is described by a {\it vakonomic} system
with the nonholonomic constraint that prohibits the instantaneous motion in the direction of the first eigenvalue.

It should be noted however, that the vakonomic description of several mechanical systems with nonholonomic
constraints   gives rise to rather unexpected behavior (see again \cite{Koz2}).

In our paper we follow a more direct approach. Namely, we describe the motion of the body with a fin by the Kirchhoff equations with a regular
tensor of adjoint masses and impose linear constraint(s) on the velocities, which serve to model the action of the fin. The reaction forces arising from the constraint
are included according to the Lagrange-D'Alembert principle, which is widely accepted
to be physically meaningful.

Our model for the constraint and the reaction forces can also be derived following the
{\it anisotropic friction} approach for realizing constraints described in  \cite{NeFu,Koz2}, see also \cite{Carath}.
In this approach one considers the unconstrained system under the influence of
a viscous frictional term  that only acts in the direction
perpendicular to the fin and is proportional to a parameter, say $\chi$. The equations of motion are then obtained by letting $\chi \rightarrow \infty$.
The obtained system is in general different from the limit vakonomic system described above.

The idea of modeling  the action of  the fin on the fluid with a nonholonomic constraint
using Lagrange-D'Alembert's principle also appears  in \cite{Rand}. Although it is regarded as an idealized situation that is physically unattainable, it is the first approximation
for their analysis.  The authors use it to study the motion of an underwater projectile with tail fins moving at
high speed.  However, due to the  high speed of the projectile,  cavitation effects appear and  the general interaction of the fluid and the body motion is not  modeled
with Kirchhoff's approach.

We thus believe that the systems that we present  serve as a first approximation  for the motion of an
underwater vehicle with a large (or very effective) fin that moves in a fluid
in the regime where  Kirchhoff's approach is valid.

In the case of a 2-dimensional body on a plane and 2-dimensional fluid, another motivation for the constraint appears: the body can interact with the plane via a sharp blade. This
setting gives a hydrodynamic generalization of the famous nonholonomic Chaplygin sleigh problem considered in detail in \cite{Chaplygin, NeFu}.

\paragraph{Contents of the paper.}
In Section 2 we review the preliminaries that are necessary for writing down
the equations of motion for our family of systems. We briefly recall  Kirchhoff's equations for a rigid body moving in a potential flow and
the Euler-Poincar\'e-Suslov equations for nonholonomic systems on Lie groups with left-invariant Lagrangian and  constraints.
Towards the end of the section the equations of motion for underwater
bodies subject to a Suslov or Chaplygin sleigh type constraint are given. We  then discuss the necessary and sufficient conditions for the existence of an invariant measure in a simple case.

The problem of the motion of the Chaplygin sleigh moving  in a potential flow (with no circulation) is treated in detail in Section 3. The reduced equations are written down explicitly for a general body
shape and their qualitative behavior is determined.  It is shown that  these equations
are Hamiltonian  with respect to a given bracket but do not preserve
a measure with a smooth density in the generic case.

We then continue to show that in the presence of
the fluid, the  sleigh generically evolves from one asymptotic circular motion to another in the opposite direction, although the limit circles do not coincide.
Their radii are given in terms of the components of the total inertia of the
fluid-body system.  For the purpose of concreteness, the
added inertia tensor is computed explicitly for an elliptical sleigh whose contact point $P$ with
the plane is located at the center of the ellipse $O$, and where the knife edge is not aligned with its
principal axes. This allows also to calculate the components of this tensor when $P$ does not coincide with $O$.

Then, in the general case, the asymptotic behavior (radius and course direction along the limit circle) is fully determined by the position of the center of mass.

In Section 4 the reduced equations of motion for the hydrodynamic Chaplygin sleigh
are integrated explicitly for a generic sleigh. The angular velocity
is integrated to give a closed expression for the
angle that determines the orientation of the body. Even though the position of the sleigh
cannot be obtained in a closed form, the distance between the centers of the limit circles is
computed explicitly.

Finally, in the Conclusions we motivate a further study of the hydrodynamic Chaplygin sleigh  in presence of  circulation and/or
point vortices.

\section{Preliminaries}
\label{S:Preliminaries}

\subsection*{Rigid body motion in a potential flow}

The motion of a rigid body in a
potential fluid in the absence of external forces was first described by Kirchhoff in 1890, who derived the reduced equations for the evolution of the body that do not incorporate the fluid itself. Its presence is instead encoded in
an ``added inertia" matrix that depends on the body shape.

Kirchhoff's equations can be understood as  the output of a two stage reduction procedure.
In the first stage one gets rid of the fluid variables by virtue of the ``particle relabeling symmetry".
In the second stage one eliminates the body configuration variables by  homogeneity
and isotropy of space. As such, Kirchhoff's equations are the Lie--Poisson
equations on the co-algebra $\se (3)^*$ where the  Hamiltonian is the total energy of
the fluid-body system. See \cite{Nov, Kanso} for more details on this reduction.

We briefly recall  Kirchhoff's equations in order to introduce the notation needed for the
rest of the paper. For a derivation of these equations obtained by
balancing the momentum of the body with the forces and  torques exerted by the fluid see Lamb's classic book on hydrodynamics,  \cite{Lamb}, which presents a thorough discussion of the problem and remains to date a key reference in the subject.

We adopt Euler's approach to the study of the rigid body dynamics and consider an orthonormal body frame
that is attached to the body. This frame is related to a fixed space frame
via an attitude (or rotation) matrix $g(t)\in SO(3)$. 

Let ${\bf V}(t)\in \R^3$, $\vecom (t) \in \R^3$ be the linear velocity of the origin of the body and its angular velocity, both vectors
are written with respect to the body frame. We then have
\begin{equation}
\label{E:Reconstruction_Equations}
\hat \vecom (t)=g^{-1}(t)\dot g(t), \qquad  {\bf V} (t)=g^{-1}(t)\dot {\bf x}(t),
\end{equation}
where the components of ${\bf x}(t)\in \R^3$ are the spatial coordinates of the origin of the body frame
at time $t$, and
\begin{equation*}
\hat \vecom := \left ( \begin{array}{ccc} 0 & - \omega_3 & \omega_2 \\ \omega_3 & 0 & -\omega_1 \\ - \omega_2 & \omega_1 & 0 \end{array} \right ),
\qquad \vecom =  \left ( \begin{array}{c}  \omega_1 \\ \omega_2  \\  \omega_3 \end{array} \right ), \quad
{\bf V} =  \left ( \begin{array}{c} v_1 \\ v_2  \\  v_3 \end{array} \right ) .
\end{equation*}

The configuration of the body is  completely determined by the pair $(g(t), {\bf x}(t))$, 
 an element of the Euclidean group $SE(3)$. 
In this way $\xi:=(\vecom, {\bf V})$ is  thought of as an element
of the Lie algebra $\se(3)$ that is identified with $\R^6$ via the bracket 
\begin{equation*}
[(\vecom , {\bf V}),(\boldsymbol{\eta}, {\bf U})]=(\vecom\times \boldsymbol{\eta}, \vecom \times {\bf U} - \boldsymbol{\eta}\times {\bf V}),
\end{equation*}
where ``$\times$" denotes the standard vector product in $\R^3$.

The kinetic energy of the body is given by (see, e.g., \cite{Landau})
\begin{equation}
\label{E:Body_Energy_in_terms_of_velocities}
T_\mathcal{B}= \frac{1}{2} \sum_{i=1}^3 \left ( m v_i^2 + \sum_{j=1}^3\In_\mathcal{B}^{ij}\omega_i\omega_j +2\Jn_\mathcal{B}^{ij}v_i\omega_j \right ),
\end{equation}
where $m$ is the total mass of the body and the constants $\In_\mathcal{B}^{ij}$ and $\Jn_\mathcal{B}^{ij}$, $i,j=1,2,3$, depend on its shape and the mass distribution. Here the $3\times 3$ matrix $\In_\mathcal{B}^{ij}$ is the usual inertia tensor of the body
with respect to the chosen frame. If the origin of the body frame is at the center of mass,  then $\Jn_\mathcal{B}^{ij}=0$.
For convenience we introduce the $6\times 6$ symmetric matrix
\begin{equation*}
\I_\mathcal{B}:= \left ( \begin{array}{cc} \In_{\mathcal{B}} & \Jn_\mathcal{B}  \\ \Jn_\mathcal{B}^T & mI \end{array} \right )
\end{equation*}
($I$ denoting the $3\times 3$ identity matrix) that defines $T_\mathcal{B}$ as a quadratic form on $\se(3)$. 

Next, the total energy of the fluid is given by
\begin{equation*}
T_\mathcal{F}=\frac{\rho}{2} \int
 || \vecu ||^2 \, dv,
\end{equation*}
where $\rho$ is the (constant) fluid density, $\vecu$ is the Eulerian velocity of the fluid,
and the integration takes place over the region occupied by the fluid.

We assume that the
fluid motion takes place in the boundless region of $\R^3$ that is not occupied by the body. We also assume that the flow is potential, with zero circulation, and is solely due to the motion of the body. Under these hypothesis
it is possible  express $T_\mathcal{F}$ as the quadratic form (see \cite{Lamb}):
\begin{equation}
\label{E:Fluid_Energy_in_terms_of_velocities}
T_\mathcal{F}= \frac{1}{2} \sum_{i,j=1}^3( \mathcal{M}_\mathcal{F}^{ij}v_iv_j +\In_\mathcal{F}^{ij}\omega_i\omega_j +2\Jn_\mathcal{F}^{ij}v_i\omega_j),
\end{equation}
where  $\mathcal{M}_\mathcal{F}^{ij}, \In_\mathcal{F}^{ij}$ and $\Jn_\mathcal{F}^{ij}$, $i,j=1,2,3$, are certain constants that only depend on the body shape.
They are referred to as \emph{added masses} and are conveniently written
in $6\times 6$ matrix form to define the \emph{added inertia tensor}:
\begin{equation*}
\I_\mathcal{F}:=\left ( \begin{array}{cc} \In_\mathcal{F} & \Jn_\mathcal{F} \\ \Jn_\mathcal{F}^T  & \mathcal{M}_\mathcal{F} \end{array} \right ),
\end{equation*}
where $\In_\mathcal{F}, \, \Jn_\mathcal{F}$, and  $\M_\mathcal{F}$ are the corresponding $3\times 3$ matrices.
One can show that the matrix $\I_\mathcal{F}$ is symmetric.

In the absence of potential forces, the total energy of the fluid-body system is $T=T_\B+T_\F$ and defines the kinetic energy Lagrangian
 $\Lag:T(SE(3))\rightarrow \R$. The motion of the body in space is determined by the geodesic motion
with respect to the Riemannian metric defined by $\Lag$.

 In view of
 (\ref{E:Body_Energy_in_terms_of_velocities}) and (\ref{E:Fluid_Energy_in_terms_of_velocities}), we can write the Lagrangian $\Lag=T_\B+T_\F$
in terms of the linear and angular velocities of the body (written in the body frame)  and this
expression does not  depend on the particular position and orientation of the body, i.e. is independent of $(g, {\bf x})$.
Thus $\Lag$ is  invariant under the lifted action  of left multiplication on $SE(3)$. This symmetry corresponds to invariance
under translations and rotations of the space frame. The reduction of this symmetry defines Euler-Poincar\'e equations on the
Lie algebra $\se(3)$ or, in the Hamiltonian setting, the Lie--Poisson equations
on the coalgebra $\se(3)^*$. The latter are precisely Kirchhoff's equations that are explicitly written below.

Define the reduced Lagrangian $L:\se(3)\rightarrow \R$ by
\begin{equation*}
 L(\xi)= \frac{1}{2} \xi^T \I \xi,
\end{equation*}
where $\xi=(\vecom, {\bf V})\in \R^3\times \R^3$ is thought as a  column vector and
the matrix $\I=\I_\mathcal{B}+\I_\mathcal{F}$.
 An element $\mu$ in the  co-algebra $\se(3)^*$ will be represented as a pair $\mu=({\bf k}, {\bf p})\in\R^3\times \R^3$ and will also be thought
of as a $6$ dimensional column vector.
Its action on $\xi= (\vecom, {\bf V})$ is defined by
\begin{equation}
\label{E:pairing_in_se(3)}
\langle \mu, \xi \rangle = {\bf k}\cdot \vecom +{\bf p} \cdot {\bf V},
\end{equation}
where $``\cdot "$ is the standard Euclidean scalar product. With this identification, the Legendre transform defines the mapping
between $\se(3)$ and $\se(3)^*$ given by $\mu=\I\xi$. Explicitly we have $\mu=({\bf k},{\bf p})$ where
\begin{equation}
\label{E:Explicit_Legendre}
{\bf k}=(\mathcal{I}_\mathcal{B}+\mathcal{I}_\mathcal{F})\vecom + (\Jn_\mathcal{B} + \Jn_\mathcal{F}) {\bf V}, \qquad
{\bf p}=m{\bf V}  + \mathcal{M}_\mathcal{F}{\bf V} +  ( \Jn_\mathcal{B}^T + \Jn_\mathcal{F}^T)\vecom.
\end{equation}
In classical hydrodynamics ${\bf k}$ and $\bf p$ are known as ``impulsive pair" and ``impulsive force" respectively.

The reduced Hamiltonian $H:\se(3)^*\rightarrow \R$ is given by $H(\mu)=\frac{1}{2}\mu^T\I^{-1}\mu$,
and the corresponding Lie--Poisson equations
$\dot \mu=\ad^*_{\I^{-1} \mu}\mu$ are then $(\dot {\bf k}, \dot {\bf p})=\ad^*_{(\vecom, {\bf V})}( {\bf k},  {\bf p})$. This gives the Kirchhoff equations
\begin{equation}
\label{E:Kirchhoff_1_3D}
\dot {\bf k}= {\bf k}\times \vecom + {\bf p} \times {\bf V}, \quad
\dot {\bf p} ={\bf p} \times \vecom .
\end{equation}

In the absence of the fluid ($\rho=0$) we have $\I_\mathcal{F}=0$.
Then, choosing the origin of the body axes at the center of mass, we obtain $\Jn_\B=0$ and ${\bf k}=\mathcal{I}_\mathcal{B}\vecom$, ${\bf p}=m{\bf V}$.
As a consequence, the equations (\ref{E:Kirchhoff_1_3D}) decouple, and we recover the well known fact about the motion of the body in vacuum: the center of mass
moves at constant velocity, whereas the body rotates freely according to the Euler equations.
It is also well known that in the presence of the fluid this is no longer true, that is, the fluid couples the translational
and rotational modes of the motion.

Given a solution of (\ref{E:Kirchhoff_1_3D}), the motion of the body in space is describing by solving
the reconstruction equations (\ref{E:Reconstruction_Equations}).

In the absence of circulation, the description of the motion of the body in the two-dimensional case is obtained in an analogous way.
The configuration space for the body motion is $SE(2)$ and we ultimately get Lie-Poisson equations on $\se(2)^*$.
This time we write $\xi\in \se(2)$ as $\xi=(\omega, {\bf V})\in \R\times \R^2$, and $\mu\in \se(2)^*$ as
$\mu=(k,{\bf p})\in \R\times \R^2$. The pairing between $\mu$ and $\xi$ is the analog of (\ref{E:pairing_in_se(3)}). Then
all of the above discussion for the three dimensional case remains
true by simply inserting the appropriate definition of the matrices  $\I_\mathcal{B}$ and $\I_\mathcal{F}$ that relate the column vectors $(k,{\bf p})$ and
$(\omega, {\bf V})$. These are $3\times 3$ matrices given by
\begin{equation}
\label{E:Inertia_Matrices_2D}
\I_\mathcal{B}=\left ( \begin{array}{ccc} \In + m(a^2+b^2) & -mb & ma \\ -mb   & m & 0 \\ ma& 0 & m \end{array} \right ), \qquad \I_\mathcal{F}=\left ( \begin{array}{cc} \In_\mathcal{F} & \Jn_\mathcal{F} \\ \Jn_\mathcal{F}^T  & \mathcal{M}_\mathcal{F} \end{array} \right ),
\end{equation}
where  $m$ is the mass of the body, $(a,b)$ are body coordinates of the center of mass, and $\In$ is the moment of inertia of the body about the center of mass. This time $  \In_\mathcal{F}$ is a scalar, $\Jn_\mathcal{F}$ is a two dimensional row vector, and $\mathcal{M}_\mathcal{F}$ is a $2\times 2$ matrix. As before,
the elements of $\I_\F$ depend solely
on the body shape.

 The Lie-Poisson equations $\dot \mu=\ad^*_{\I^{-1} \mu}\mu$ are given in components as 
\begin{equation*}
\begin{split}
\dot k &= v_2p_1-v_1p_2, \\
\dot p_1 &= \omega p_2, \quad  \quad
\dot p_2 = - \omega p_1,
\end{split}
\end{equation*}
where ${\bf p}=(p_1,p_2)$ and $(k,{\bf p})=\I (\omega, {\bf V})$ with $\I=\I_\B+\I_\F$.
The reconstruction equations (\ref{E:Reconstruction_Equations}) take the form
\begin{equation}
\label{E:Reconstruction_Equations_2D}
\dot \phi = \omega, \qquad v_1=\dot x \cos \phi  +  \dot y\sin \phi , \quad v_2= -\dot x\sin \phi   +  \dot y \cos \phi  ,
\end{equation}
where $\phi$ is the  rotation angle between the space and the body frame and $(x,y)$ are spatial coordinates of the origin of the body axes.

\subsection*{The Euler--Poincar\'e--Suslov equations.}

We have seen that Kirchhoff equations for a rigid body in a potential fluid are Lie-Poisson equations
on $\se(3)^*$ ($\se(2)^*$ in the two dimensional case) corresponding to a pure kinetic energy \emph{left} invariant
Lagrangian. We are interested in adding \emph{left} invariant nonholonomic constraints to the system. The resulting
reduced equations, that are consistent with Lagrange-D'Alembert's principle that states that the constraint
force performs  no work during the motion, are the so-called Euler--Poincar\'e--Suslov (EPS) equations.
We will write these equations explicitly.

In general, a nonholonomic system on a Lie group $G$ with a left invariant kinetic energy Lagrangian and
left invariant constraints is termed an \emph{LL system}.
Due to invariance, 
the dynamics reduce to the Lie algebra $\g$, or on its dual $\g^*$ if working with the momentum formulation.

The reduced Lagrangian $L:\g\rightarrow \R$ defines the inertia operator $\I:\g \rightarrow \g^*$ by the
relation
\begin{equation*}
L(\xi)= \frac{1}{2}\langle \I\xi , \xi \rangle, \qquad \mbox{for} \;\; \xi \in \g,
\end{equation*}
where $\langle \cdot , \cdot \rangle$ denotes the duality pairing. The reduced Hamiltonian, $H:\g^*\rightarrow \R$,
is then given by
\begin{equation*}
H(\mu)= \frac{1}{2}\langle \mu , \I^{-1} \mu \rangle, \qquad \mbox{for} \;\; \mu \in \g^*.
\end{equation*}

The constraints can be expressed  as the annihilator of independent fixed co-vectors $\nu_i\in \g^*$. We say that an instantaneous velocity $\xi\in \g$ satisfies the constraints if
\begin{equation}
\label{E:LL-constraints}
\langle \nu_i , \xi \rangle =0, \qquad i=1, \dots, n,
\end{equation}
where $n$ is the number of constraints. The constraints are nonholonomic if the set of vectors $\xi \in \g$ satisfying the above condition
do not span a subalgebra of $\g$.

The reduced EPS equations on $\g^*$  are given by, see e.g. \cite{Blochbook},
\begin{equation}
\label{E:Euler-Poincare-Suslov}
\dot \mu=\ad^*_{\I^{-1}\mu} \mu + \sum_{i=1}^n\lambda_i \nu_i,
\end{equation}
where the multipliers $\lambda_i$ are certain scalars that are uniquely determined by the condition that the constraints (\ref{E:LL-constraints})
are satisfied.

\subsection*{Underwater rigid body with a left-invariant nonholonomic constraint}

We will be interested in the case $G=SE(3)$ with the coalgebra $\se(3)^*=({\bf k}, {\bf p})$ and $n=1$, which corresponds to
the motion of an underwater rigid body subject to a linear, left invariant and nonholonomic constraint
\begin{equation} \label{c1}
{\bf a}\cdot  \vecom  +  {\bf F}\cdot  {\bf V} =0,
\end{equation}
${\bf a},{\bf F}$ being some constant vectors in the \emph{body} frame.
The constraint is nonholonomic provided that the set of vectors $(\vecom, {\bf V})$ that satisfy the above condition do not
form a subalgebra of $\se(3)$.

Then, in view of (\ref{E:Kirchhoff_1_3D}), the EPS equations (\ref{E:Euler-Poincare-Suslov}) become
\begin{equation}
\label{E:Kirchhoff_1_3D_constrained}
\begin{split}
\dot {\bf k}&= {\bf k} \times \vecom + {\bf p} \times {\bf V} + \lambda {\bf a}, \\
\dot {\bf p} &={\bf p} \times \vecom + \lambda {\bf F},
\end{split}
\end{equation}
where $({\bf k},{\bf p})=\I ( \omega,  {\bf V})$ as described by (\ref{E:Explicit_Legendre}),
the total inertia operator is $\I=\I_\mathcal{B} + \I_\mathcal{F}$,
and the multiplier $\lambda$ is uniquely determined by the constraint \eqref{c1}. The system thus describes dynamics on the linear subspace
${\mathfrak d}\subset se(3)$ defined by \eqref{c1} or on its image in  $se(3)^*$.

When both vectors ${\bf a},{\bf F}$ are nonzero, it is difficult to present a mechanical interpretation of the constraint.

So, we consider two special cases:
\medskip

1) ${\bf F}={\bf 0}$. That is, the constraint is only on the angular velocity: ${\bf a}\cdot  \vecom =0$.
Then the equations (\ref{E:Kirchhoff_1_3D_constrained}) represent a hydrodynamic generalization of
the classical Suslov problem, see e.g., \cite{Suslov}. The latter describes the motion of a rigid body about a fixed point in  presence of this constraint.

If the origin of the body is at its mass center ($\Jn_\mathcal{B}=0$) and in the added masses the translational and rotational components are decoupled
($\Jn_\mathcal{F}=0$), then, according to \eqref{E:Explicit_Legendre},
$$
{\bf k}= (\mathcal{I}_\mathcal{B}+\mathcal{I}_\mathcal{F}) \vecom, \quad
{\bf p}= m{\bf V}  + \mathcal{M}_\mathcal{F}{\bf V},
$$
and the system (\ref{E:Kirchhoff_1_3D_constrained}) takes the closed form
\begin{equation*}
\begin{split}
\dot {\bf k}&= {\bf k} \times  (\mathcal{I}_\mathcal{B}+\mathcal{I}_\mathcal{F})^{-1}{\bf k}
+ {\bf p} \times ( m I  + \mathcal{M}_\mathcal{F})^{-1} {\bf p} + \lambda {\bf a}, \\
\dot {\bf p} &={\bf p} \times (\mathcal{I}_\mathcal{B}+\mathcal{I}_\mathcal{F})^{-1}{\bf k} .
\end{split}
\end{equation*}
If we interpret the momentum ${\bf p}$ as a vector fixed in space, the above system
has the same form as the equations of the Suslov problem in the quadratic potential (Clebsch--Tisserand) field
$U=\frac 12 {\bf p}\cdot A {\bf p}$, $A=( m I  + \mathcal{M}_\mathcal{F})^{-1}$, which
was studied in detail in \cite{Koz1, Tatarinov} under the assumption  that $\bf a$ is an eigenvector
of $(\mathcal{I}_\mathcal{B}+\mathcal{I}_\mathcal{F})^{-1}$, see also \cite{Jo3}. The latter
condition is important to guarantee the existence of an invariant measure (see the discussion below).

Apparently, for the general tensor $\I_\mathcal{B}+\I_\mathcal{F}$ the system (\ref{E:Kirchhoff_1_3D_constrained})
with the constraint ${\bf a}\cdot \vecom =0$ has not been studied before.


2) ${\bf a}={\bf 0}$. That is, the constraint is only on the linear velocity of the body,
and the system can be interpreted as an underwater version of the 3-dimensional Chaplygin sleigh, a rigid body moving in ${\mathbb R}^3$ under the
condition ${\bf F}\cdot {\bf V}=0$. This mechanical setting is regarded as an approximate model of an underwater vehicle with a big fin,
as it was mentioned in the Introduction.

\paragraph{Existence of an invariant measure.} It is natural to  ask whether the equations
\eqref{E:Kirchhoff_1_3D_constrained}, \eqref{c1} possess a smooth invariant measure.
This problem has been considered for general EPS equations on Lie algebras of compact groups by Kozlov \cite{Koz3},
by Jovanovi\'c \cite{Jo1} in the non-compact case, and by Zenkov and Bloch \cite{ZB} for systems with nontrivial shape space.
Following \cite{Jo1}, the necessary and sufficient condition for equations \eqref{E:Euler-Poincare-Suslov}, \eqref{E:LL-constraints}
to have such a measure in the case $n=1$ is that the constraint covector $\nu=\nu_1 \in \g^*$ satisfies
\begin{equation*}
\frac{1}{\langle \nu , \I^{-1}\nu \rangle} \, \operatorname{ad}^*_{\I^{-1}\nu}\nu + T = c\nu, \qquad c\in \R,
\end{equation*}
where $T\in \g^*$ is defined by the relation $\langle T, \xi \rangle = \operatorname{trace}(\operatorname{ad}_\xi)$, $\xi\in \g$.

Since the group $SE(3)$ is unimodular, in our case $T=0$, and in case of the generic constraint \eqref{c1}  the above condition becomes
\begin{equation}
\label{E:inv-measure-cond}
({\bf a}\times {\bf U} + {\bf F} \times {\bf W} \, , \,  {\bf F} \times {\bf U})=c \, ({\bf a}, {\bf F}), \qquad c\in \R,
\end{equation}
where $( {\bf U},  {\bf W}):= \I^{-1}({\bf a}, {\bf F})$.

This condition is easily analyzed
in the special cases that we considered (either ${\bf F}={\bf 0}$ or ${\bf a} ={\bf 0}$) under
 the assumption that both $\Jn_\mathcal{B}=0$ and $\Jn_\mathcal{F}=0$.
 One can then show that \eqref{E:inv-measure-cond} is equivalent to
 asking that ${\bf a}$ is an eigenvector of $(\mathcal{I}_\mathcal{B}+\mathcal{I}_\mathcal{F})^{-1}$ in the case ${\bf F}={\bf 0}$, or to the condition that ${\bf F}$ is
 an eigenvector of $A=(mI + \mathcal{M}_\mathcal{F})^{-1}$ in the case ${\bf a}={\bf 0}$.

\section{The hydrodynamic planar Chaplygin sleigh}

We now consider in detail the two-dimensional version of (\ref{E:Kirchhoff_1_3D_constrained}), which corresponds to the coalgebra
$\se(2)^*=({k}, {\bf p})$ and the nonholonomic constraint
\begin{equation*}
a  \omega +  F_1v_1+F_2v_2=0.
\end{equation*}
Assume that $a=0$ and choose, without loss of generality, $F_1=0$. Then the constraint takes the form $v_2=0$ and
the reduced equations of motion are
\begin{equation}
\label{E:Sleigh_2D_constrained}
\begin{split}
\dot k &= v_2p_1-v_1p_2, \\
\dot p_1 &= \omega p_2, \qquad
\dot p_2 = - \omega p_1 +\lambda .
\end{split}
\end{equation}
Here the column vectors $(k, {\bf p})^T$ and $(\omega, {\bf V})^T$ are related by $(k, {\bf p})^T=\I(\omega, {\bf V})^T$, with the
$3\times 3$ tensor $\I=\I_\B+\I_\F$. The multiplier $\lambda$ is determined from the condition $v_2=0$.

In the absence of fluid ($\I_\F=0$) the equations \eqref{E:Sleigh_2D_constrained} become
the classical Chaplygin sleigh problem, which goes back to 1911, \cite{Chaplygin}, and describes the motion of a planar rigid body
with a knife edge (a blade) that slides on the plane.
The nonholonomic constraint forbids the motion in the direction perpendicular to the knife edge. T
he asymptotic motions of the sleigh on the plane are straight-line uniform motions, see \cite{Blochbook, Chaplygin, NeFu}. 

In the presence of the potential fluid ($\I_\F\ne 0)$ the added masses define a more general kinetic energy, and
the system \eqref{E:Sleigh_2D_constrained} describes a hydrodynamic generalization of the Chaplygin sleigh.
We shall see that this leads to new features in the asymptotic behavior of the body.

\paragraph{The total inertia tensor.}
Introduce the body reference frame $\{ {\bf E_1} \, {\bf E_2} \}$ centered at the contact point between the knife edge and the plane and choose ${\bf E_1}$
to be parallel to the blade and ${\bf E_2}$ orthogonal to it. This ensures the above constraint $v_2=0$.

While the expression for $\I_\B$ with respect to the body frame was given in (\ref{E:Inertia_Matrices_2D})
for an arbitrary body, the expression for the tensor of adjoint masses $\I_\F$ can be given explicitly only for rather simple geometries.
A simple yet interesting case is an elliptical planar body with the semi-axes $A>B>0$.
Assume first that the origin is at the
center of the ellipse, but the coordinate axes ${\bf E_1}, \, {\bf E_2}$ are not aligned with the axes of the ellipse,
forming an angle $\theta$ (measured counter-clockwise), as illustrated in figure 1 (a). 

\begin{figure}[ht]
\centering
\subfigure[]{\includegraphics[width=5cm]{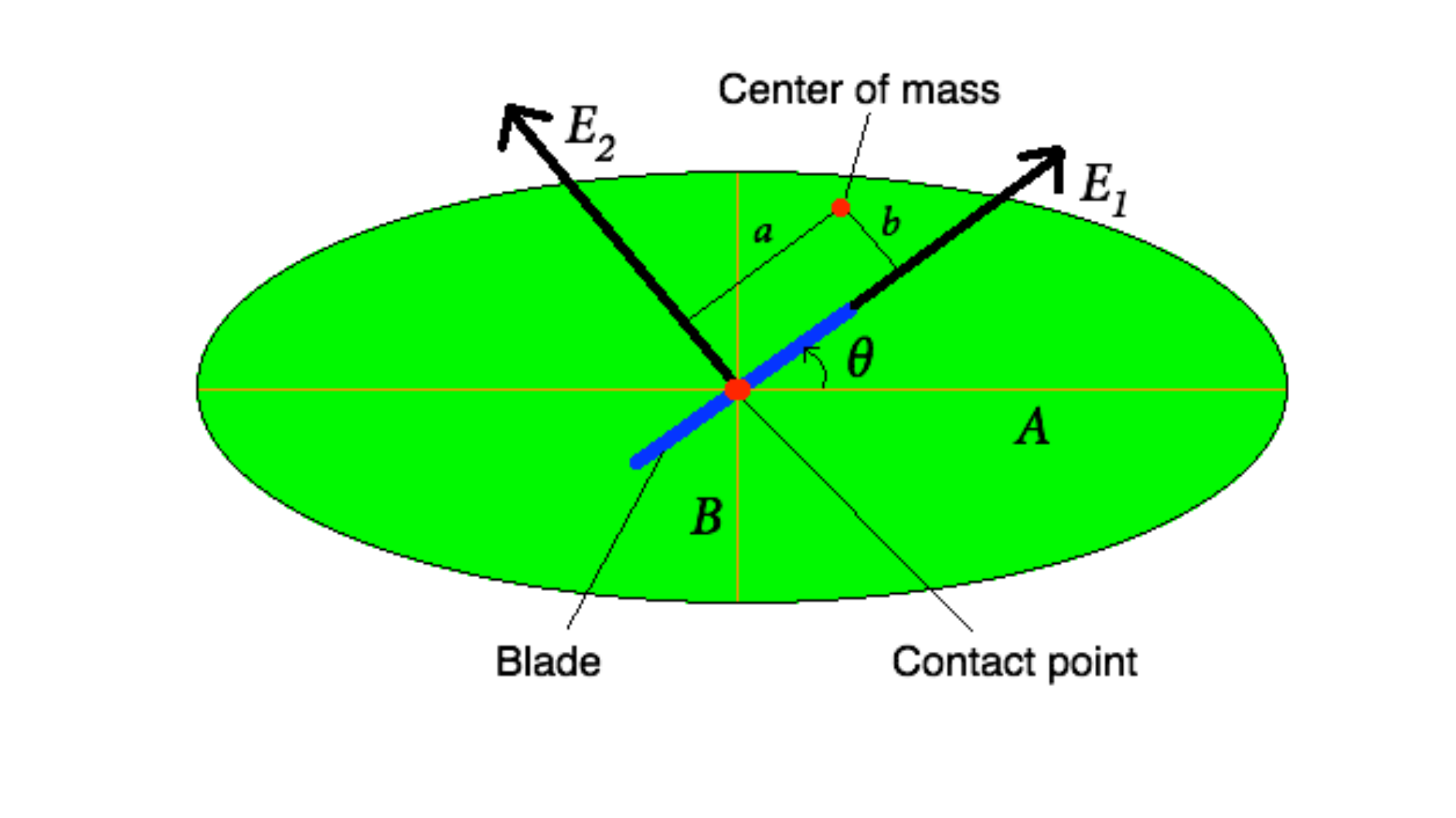}} \qquad
\subfigure[]{\includegraphics[width=4.85cm]{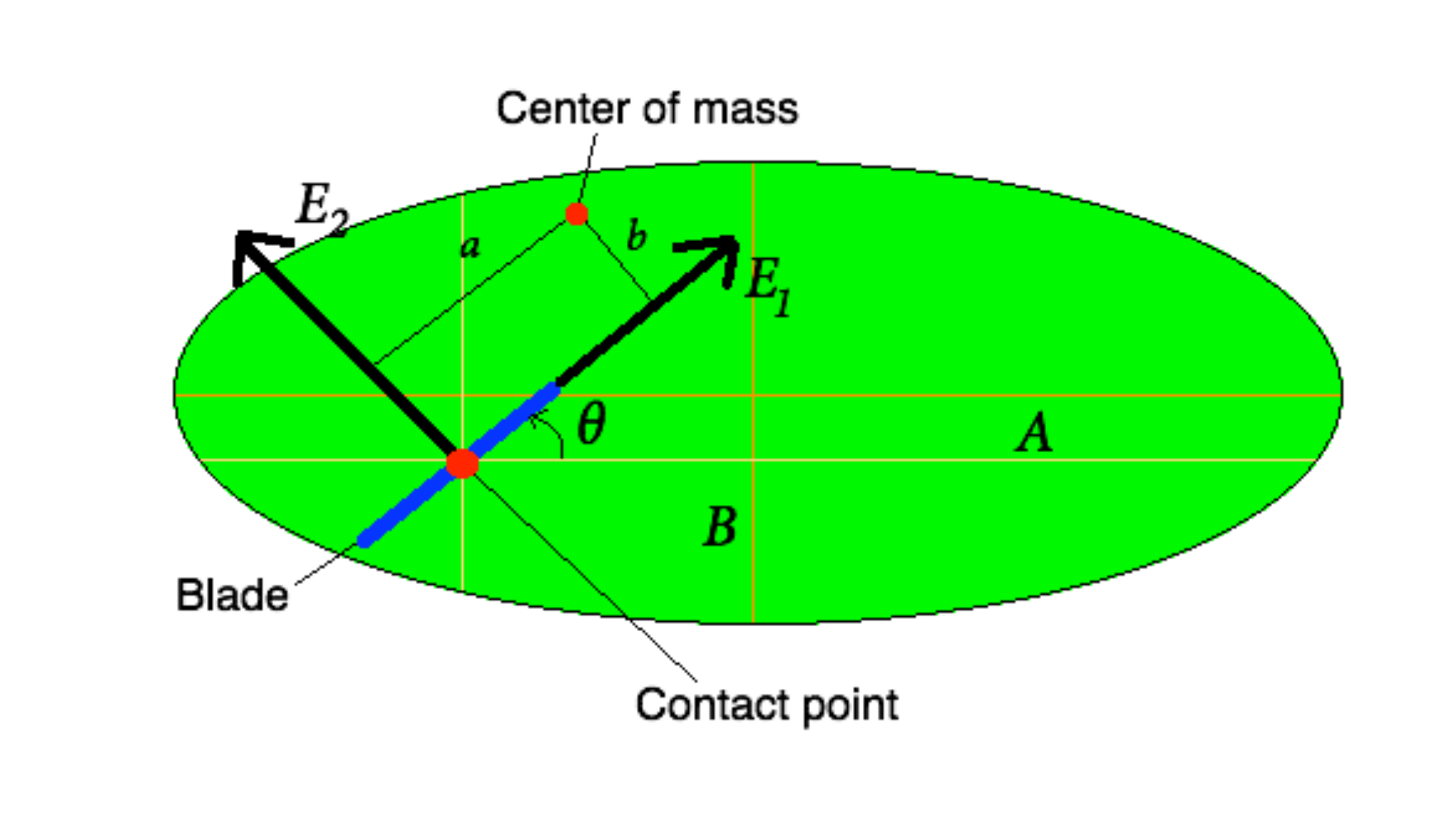}}
\caption{\small{The elliptical  sleigh. The blade makes an angle $\theta$ with the major axis of the ellipse.}}
\end{figure}

For this geometry, using the formula for the fluid potential given in \cite{Lamb}, one can show that
\begin{equation}
\label{E:Added_Inertia_Ellipse_Rot_Axes}
\I_\mathcal{F}=\rho \pi \left ( \begin{array}{ccc} \frac{ (A^2-B^2)^2}{4} & 0 & 0 \\ 0 &   B^2\cos^2\theta +A^2 \sin^2 \theta& \frac{ A^2-B^2}{2} \, \sin (2\theta) \\ 0 & \frac{ A^2-B^2}{2} \, \sin (2\theta) & A^2\cos ^2\theta + B^2\sin^2\theta \end{array} \right ).
\end{equation}
The total inertia tensor, $\I= \I_\mathcal{B} +\I_\F$, of the fluid-body system is then given by
\begin{small}
\begin{equation*}
\label{E:Total_Inertia_Ellipse_Rot_Axes}
\I= \left ( \begin{array}{ccc}  \In+m(a^2+b^2) + \rho \pi \frac{(A^2-B^2)^2}{4} & -mb & ma \\
-mb & m+\rho \pi \left (B^2\cos^2\theta +A^2 \sin^2 \theta \right )& \rho\pi \left (\frac{ A^2-B^2}{2} \right )  \sin (2\theta) \\
ma & \rho\pi \left (\frac{ A^2-B^2}{2} \right ) \sin (2\theta) & m+ \rho \pi (A^2\cos ^2\theta + B^2\sin^2\theta )\end{array} \right ).
\end{equation*}
\end{small}

Notice that in the presence of the fluid, if $\theta \neq n\frac{\pi}{2}, n\in \mathbb{Z}$, the coefficient $\I_{23}=\I_{32}$ is non-zero. This can never
be the case if the sleigh is moving in vacuum as one can see from the expression given for $\I_\B$ in  (\ref{E:Inertia_Matrices_2D}).
The appearance of this non-zero term leads to interesting dynamics that are studied below and that, to our knowledge, had not been described before in the literature.

Now, if the origin of $({\bf E_1} \, {\bf E_2})$ is not in the center of the ellipse (Figure 1 (b)),
then the tensor \eqref{E:Added_Inertia_Ellipse_Rot_Axes} takes a more general form with $(\I_{\cal F})_{13}, (\I_{\cal F})_{23}$ non-zero,
which can be calculated explicitly and lead to the corresponding modification of the total tensor $\I$.

In the sequel we assume that the shape of the sleigh is arbitrary convex and that its mass center does
not necessarily coincide with the origin, which leads to the general total inertia tensor
\begin{equation*}
\I= \left ( \begin{array}{ccc} J & -L_2 & L_1 \\ -L_2 & M &Z  \\ L_1& Z & N \end{array} \right ) .
\end{equation*}
We keep in mind that we expect to see new phenomena due to the presence of the fluid (when $Z\neq 0$). The tensor for classical Chaplygin sleigh is recovered by setting $Z=0$, $J=\In +m(a^2+b^2)$, $M=N=m$, $L_1=ma$, and $L_2=mb$.

Note that, in any case, since the total energy of the motion is positive definite, the tensor $\I$ has the same property.

\paragraph{Detailed equations of motion.}
The constraint written in terms of momenta is $v_2=\left( \I^{-1}(k, {\bf p} )^T \right)_3=0$.
Differentiating it and using (\ref{E:Sleigh_2D_constrained}), we find the multiplier
\begin{equation*}
\lambda=-\frac{1}{( { \I^{-1}} )_{33}} \left( \I^{-1} \left ( \begin{array}{c} v_2 p_1 - v_1 p_2 \\ \omega p_2 \\ -\omega p_1 \end{array} \right ) \right )_3,
\end{equation*}
where
\begin{equation*}
\I^{-1}= \frac{1}{\mbox{det}(\I)} \left ( \begin{array}{ccc} MN-Z^2 & ZL_1+NL_2 & -ZL_2-ML_1 \\ ZL_1+NL_2 & JN-L_1^2 &-L_1L_2-JZ  \\ -ZL_2-ML_1 & -L_1L_2-JZ  & JM-L_2^2 \end{array} \right ).
\end{equation*}

A long but straightforward calculation shows that,
by expressing $\omega, v_1$ and $v_2$ in terms of  $k, p_1, p_2$, substituting into (\ref{E:Sleigh_2D_constrained}),
and  enforcing the constraint $v_2=0$, one obtains:
\begin{equation}
\label{E:Working_Hydro_Sleigh_Equations}
\begin{split}
\dot \omega &=\frac{1}{D}\left (L_1 \omega + Z v_1 \right ) \left ( L_2 \omega - Mv_1\right ), \\
\dot v_1 &=\frac{1}{D} \left (L_1 \omega + Z v_1 \right ) \left (J\omega -L_2 v_1 \right ),
\end{split}
\end{equation}
where we set $D= \operatorname{det} (\I)( {\I^{-1}} )_{33}= MJ-L_2^2$. Note that $D>0$ since $\I$ and $\I^{-1}$ are positive definite.

The full motion of the sleigh on the plane is determined by the reconstruction equations \eqref{E:Reconstruction_Equations_2D}, which,
in our case with $v_2=0$, reduce to
\begin{equation} \label{recon}
\dot \phi = \omega, \qquad \dot x=v_1\cos \phi, \qquad \dot y = v_1\sin \phi.
\end{equation}

The reduced energy integral has
\begin{equation*}
H=\frac{1}{2}\left ( J\omega^2 +Mv_1^2 -2L_2\omega v_1 \right ),
\end{equation*}
and its level sets are ellipses on the $(\omega \, v_1)$-plane. As seen from the equations,
the straight line $\ell=\{L_1 \omega + Z v_1 =0\}$ consists of equilibrium points for the system.

Hence, if $L_1$ and $Z$ do not vanish simultaneously, the trajectories of (\ref{E:Working_Hydro_Sleigh_Equations}) are elliptic arcs that form heteroclinic connections
between the asymptotically unstable and stable equilibria on $\ell$ (see Fig. 2 (a)). 
The phase portrait is similar to that of the classical Chaplygin sleigh except that the line of equilibra is no longer  the $v_1$ axis if $Z\neq 0$.

\paragraph{Remark.} In fact, the reduced 2-dimensional system (\ref{E:Working_Hydro_Sleigh_Equations}) can be checked to be Hamiltonian
with respect to the following Poisson bracket of functions of $\omega, v_1$\footnote{
A similar observation for some other generalizations of the Chaplygin sleigh was made in \cite{Bor_Mam}.}
\begin{equation*}
\{F_1,F_2 \}:= -\frac{1}{D}\left (L_1 \omega + Z v_1 \right ) \left ( \frac{\partial F_1}{\partial \omega} \frac{\partial F_2}{\partial v_1} -
\frac{\partial F_1}{\partial v_1} \frac{\partial F_2}{\partial \omega} \right ).
\end{equation*}
The invariant symplectic leaves  consist of the semi-planes separated by the equilibria line $\ell$
and the zero-dimensional leaves formed by the points on this line. The above bracket can
be obtained using the construction developed in \cite{GN}.
\medskip

\paragraph{The case $Z=L_1=0$.} We start by considering  the most degenerate case when both $L_1$ and $Z$ vanish.
In particular, for the elliptical sleigh this is the case when $\theta=0 $ and $a=0$ in figure 1 (a).
Then all the solutions of \eqref{E:Working_Hydro_Sleigh_Equations} are equilibria.
Such a sleigh always performs a uniform circular or straight line motion whose parameters
depend on the initial condition. In this case, the euclidean measure (or any smooth measure for
that matter) is trivially preserved by equations \eqref{E:Working_Hydro_Sleigh_Equations}.
In fact this is the only case in which there is a smooth preserved measure as
we now show.

\begin{proposition}
The reduced equations \eqref{E:Working_Hydro_Sleigh_Equations} possess a smooth invariant measure if and only if $L_1=Z=0$.
\end{proposition}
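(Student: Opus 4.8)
The plan is to treat the two implications separately. The \emph{if} direction is immediate: when $L_1=Z=0$ the right-hand sides of \eqref{E:Working_Hydro_Sleigh_Equations} vanish identically, so every smooth measure — in particular Lebesgue measure $d\omega\,dv_1$ — is invariant. For the \emph{only if} direction I would argue by contradiction. Assume $(L_1,Z)\ne(0,0)$, so that $\ell=\{L_1\omega+Zv_1=0\}$ is a genuine line, and suppose $\mu=\sigma(\omega,v_1)\,d\omega\,dv_1$ is an invariant measure with $\sigma$ smooth and positive. Writing $X$ for the planar vector field on the right of \eqref{E:Working_Hydro_Sleigh_Equations}, invariance of $\mu$ is equivalent to
\[
\operatorname{div}(\sigma X)=X(\sigma)+\sigma\,\operatorname{div}X=0 .
\]

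The crucial step is to restrict this identity to the equilibrium line $\ell$. Since $X\equiv 0$ on $\ell$ (as noted in the text), one has $X(\sigma)|_\ell=0$, and the identity collapses to $\sigma\,\operatorname{div}X|_\ell=0$; because $\sigma>0$, this forces $\operatorname{div}X$ to vanish identically along $\ell$. I would then compute the divergence directly from \eqref{E:Working_Hydro_Sleigh_Equations}, a short two-term product-rule calculation yielding
\[
\operatorname{div}X=\frac{1}{D}\Big[(L_1L_2+JZ)\,\omega-(ML_1+L_2Z)\,v_1\Big].
\]
As $\operatorname{div}X$ is linear, its restriction to $\ell$ vanishes iff it vanishes at the nonzero point $(Z,-L_1)\in\ell$, and there it equals
\[
\operatorname{div}X\big|_{(Z,-L_1)}=\frac{1}{D}\big(JZ^2+2L_1L_2\,Z+ML_1^2\big).
\]
But the matrix $\left(\begin{array}{cc} J & L_2 \\ L_2 & M\end{array}\right)$ is positive definite — it has the same diagonal entries and the same determinant $JM-L_2^2=D>0$ as the upper-left block of the positive-definite tensor $\I$ — so $JZ^2+2L_1L_2Z+ML_1^2>0$ whenever $(L_1,Z)\ne(0,0)$. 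This contradicts $\operatorname{div}X|_\ell=0$, so we must have $L_1=Z=0$, completing the proof.

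I do not anticipate a serious obstacle; the only point requiring care is the regularity hypothesis on the measure, and in particular the behaviour of the density at the origin of $\ell$. If one wishes to allow a smooth \emph{nonnegative} density $\sigma\not\equiv0$ rather than a strictly positive one, one should first rule out that $\sigma$ vanishes on all of $\ell$: on either open half-plane bounded by $\ell$ the field $X$ is nowhere zero and, by the Hamiltonian structure recorded in the Remark, admits the explicit invariant density proportional to $(L_1\omega+Zv_1)^{-1}$, so the product $(L_1\omega+Zv_1)\,\sigma$ is a first integral on that half-plane; since every trajectory there is heteroclinic to equilibria on $\ell$, continuity forces this first integral to be identically zero, hence $\sigma\equiv0$ on the half-plane and therefore everywhere — a contradiction. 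Thus $\sigma$ is positive at some point of $\ell\setminus\{0\}$, and the linear-algebra argument above goes through unchanged.
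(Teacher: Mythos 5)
Your proof is correct, but it follows a genuinely different route from the paper's. The paper simply specializes the general Jovanovi\'c criterion \eqref{E:inv-measure-cond} (quoted in Section 2) to ${\bf a}={\bf 0}$, ${\bf F}=(0,1,0)$, obtaining the pair of linear equations $ML_1+ZL_2=0$, $L_1L_2+JZ=0$, whose coefficient matrix has determinant $MJ-L_2^2=D>0$, forcing $L_1=Z=0$. You instead work directly with the planar reduced field: you compute $\operatorname{div}X=\frac{1}{D}\bigl[(L_1L_2+JZ)\omega-(ML_1+L_2Z)v_1\bigr]$ (which checks out — the $L_2(L_1\omega+Zv_1)$ terms cancel), observe that invariance of $\sigma\,d\omega\,dv_1$ forces $\operatorname{div}X$ to vanish on the equilibrium line $\ell$ because $X(\sigma)|_\ell=0$ there, and evaluate at $(Z,-L_1)\in\ell$ to get $E/D>0$ with $E$ exactly the quantity \eqref{E:def_E} from the stability analysis — a contradiction. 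Your argument is self-contained (it does not rely on the cited general theorem for EPS systems on unimodular groups) and it exposes the mechanism concretely: the nonvanishing of $\operatorname{div}X$ along $\ell$ is the same positive-definiteness statement $E>0$ that drives the attractor/repeller structure of Proposition \ref{P:Stability}, so the nonexistence of a measure and the asymptotic dissipative behaviour are visibly two faces of one computation. The paper's route is shorter and consistent with the framework it sets up in Section 2. Your closing remark handling merely nonnegative densities (via the explicit half-plane density $(L_1\omega+Zv_1)^{-1}$ and the heteroclinic structure of the elliptic arcs) is a correct extra refinement not addressed in the paper, though under the usual convention that a smooth invariant measure has positive density it is not needed.
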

\begin{proof}
The general condition \eqref{E:inv-measure-cond} for a preserved measure can be
specialized to our  two dimensional problem by putting $${\bf a}={\bf 0}, \quad {\bf F}=(0,1,0), \;\;
{\bf U}=\operatorname{det}(\I)^{-1}(-ZL_2-ML_1, 0, 0), \quad {\bf W}=\operatorname{det}(\I)^{-1}(-L_1L_2-JZ, JM-L_2^2, 0).$$
One gets the necessary and sufficient conditions
\begin{equation*}
ML_1+ZL_2=0, \qquad L_1L_2+JZ=0.
\end{equation*}
The above can be seen as a linear system of equations for $L_1$ and $Z$ with
non-zero determinant $MJ-L_2^2=D>0$. Hence this condition can only hold if $L_1=Z=0$.
\end{proof}

The result of this proposition is to be expected from the qualitative
behavior of the system that was described above in the case where $L_1$ and $Z$ do not
vanish simultaneously.

\paragraph{The case  $Z\ne 0$.} In this case the equilibrium points correspond to periodic circular motion of the body on the plane, and the contact point of the blade (the origin) goes
along circles of the radius
\begin{equation} \label{rad}
r= \left| \lim_{t\pm \infty} v_1 /\lim_{t\pm \infty} \omega \right| = \left|- \frac {L_1}{Z}\right | ,
\end{equation}
whereas the whole motion is an asymptotic evolution from one circular motion to the other one, but in opposite directions,
as shown in Fig. 2 (b). 
(Clearly, when $L_1=0$ the radius $r$ is zero, and the limit motions of the body are just rotations about the fixed origin.)

\begin{figure}[h]
\centering
\subfigure[Reduced phase portrait under the  assumption $L_1,Z>0$. The stable and unstable equilibra are represented by
filled and empty dots, respectively.]{\includegraphics[width=5cm]{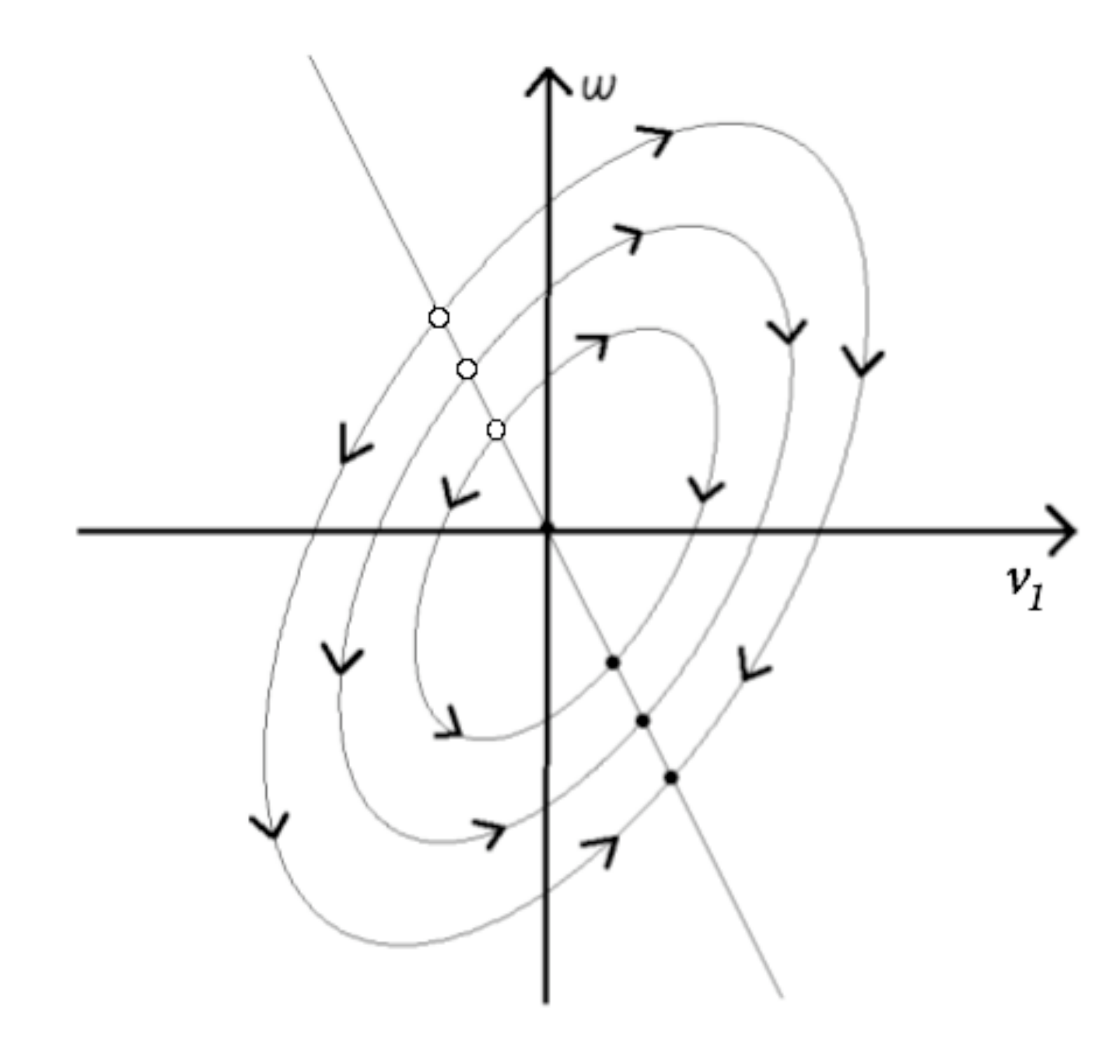}} \qquad \qquad
\subfigure[Trajectory of the elliptic sleigh on the plane. Asymptotic evolution from one circular motion to another one in opposite directions. The dot
on the sleigh surface represents its center of mass.]{\includegraphics[width=5cm]{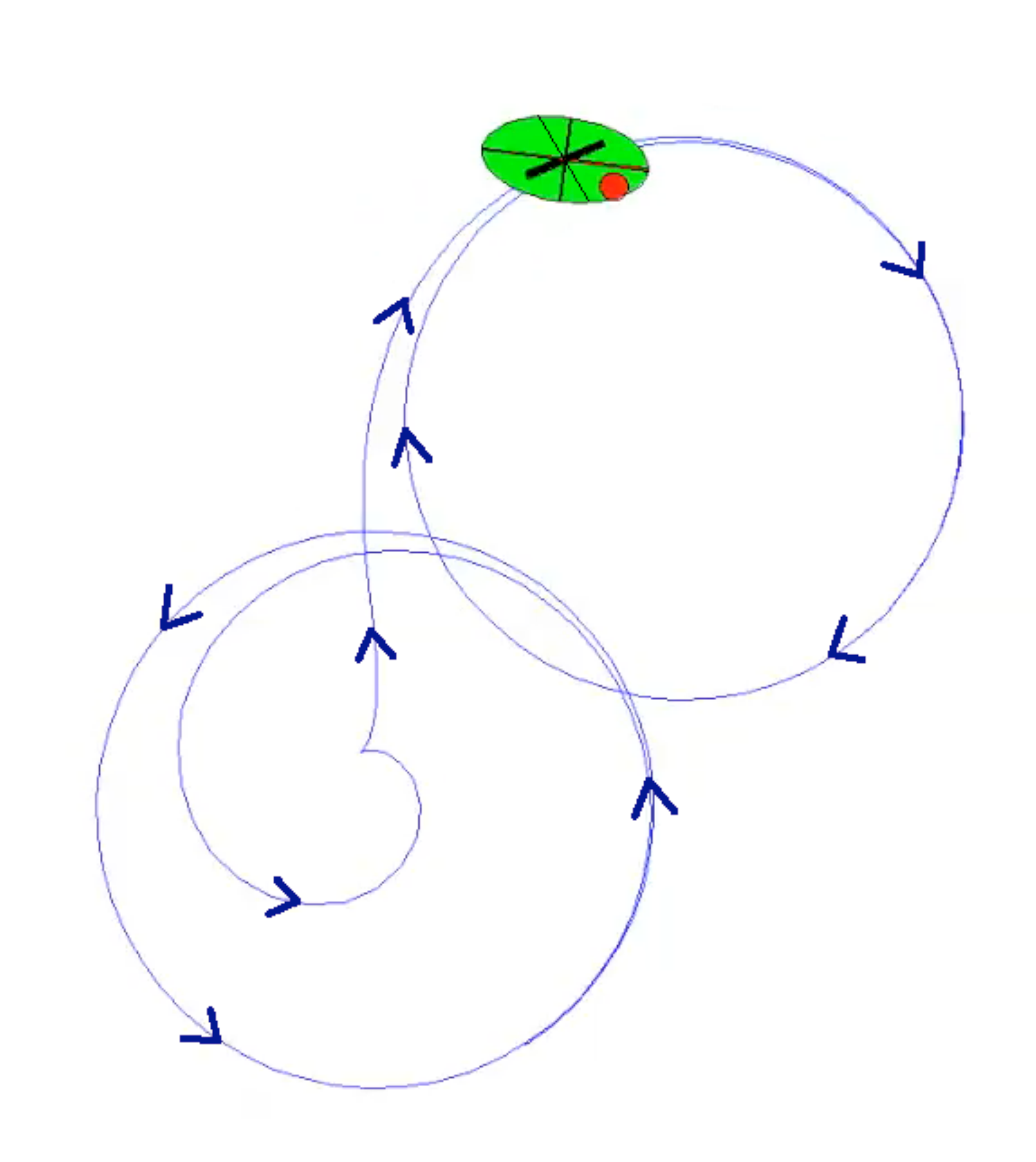}}
\caption{\small{Reduced phase portrait and trajectory of the sleigh in the plane.}}
\end{figure}

The preferred direction of rotation is determined by the following proposition
whose proof follows from a simple linear stability analysis.
%

\begin{proposition}
\label{P:Stability}
Let the line of equilibra $\ell= \{ L_1 \omega + Z v_1 =0\}$ be parameterized by $v_1(s)=L_1s, \ \omega(s)=-Zs, \; s\in \R$.
 The equilibra corresponding to $s<0$ are unstable, whereas the equilibra corresponding to $s>0$ are  stable.
\end{proposition}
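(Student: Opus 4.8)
The plan is a direct linear stability analysis at a point of the equilibrium line $\ell$. Write $\Lambda := L_1\omega + Z v_1$, so that the right-hand sides of \eqref{E:Working_Hydro_Sleigh_Equations} are $\dot\omega = \tfrac1D\,\Lambda\,(L_2\omega - Mv_1)$ and $\dot v_1 = \tfrac1D\,\Lambda\,(J\omega - L_2 v_1)$. Because $\Lambda$ vanishes identically on $\ell$, only the terms in which $\Lambda$ itself gets differentiated survive at an equilibrium, so the Jacobian at the equilibrium with parameter $s$ (i.e.\ $\omega=-Zs$, $v_1=L_1 s$) is
\[
\frac1D\begin{pmatrix} L_1\,\alpha & Z\,\alpha \\ L_1\,\beta & Z\,\beta \end{pmatrix},
\qquad
\alpha := L_2\omega - Mv_1 = -s\,(ML_1 + L_2 Z), \quad
\beta := J\omega - L_2 v_1 = -s\,(JZ + L_1 L_2).
\]
Since $Z\ne 0$ we have $(L_1,Z)\ne(0,0)$, and then $(\alpha,\beta)\ne(0,0)$ for $s\ne 0$ (otherwise $ML_1+L_2Z=JZ+L_1L_2=0$, a system with nonzero determinant $D$), so this matrix has rank one. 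Its kernel is spanned by $(-Z,L_1)$, which is exactly the tangent to $\ell$ in the given parameterization; hence one eigenvalue is $0$ (the direction along $\ell$) and the transverse eigenvalue equals the trace
\[
\sigma = \frac{L_1\alpha + Z\beta}{D} = -\frac{s}{D}\,\bigl(ML_1^2 + 2L_1 L_2 Z + JZ^2\bigr).
\]

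The next step is to fix the sign of $\sigma$. One already knows $D = MJ - L_2^2 > 0$. The quadratic form $Q(L_1,Z) = ML_1^2 + 2L_1 L_2 Z + JZ^2$ has matrix $\left(\begin{smallmatrix} M & L_2 \\ L_2 & J\end{smallmatrix}\right)$, whose leading principal minors are $M>0$ and $MJ-L_2^2 = D>0$; both positivities follow because the top-left $2\times2$ block $\left(\begin{smallmatrix} J & -L_2 \\ -L_2 & M\end{smallmatrix}\right)$ of the positive-definite total inertia tensor $\I$ is itself positive definite. Thus $Q(L_1,Z)>0$ in the case at hand, and therefore $\sigma$ has sign opposite to $s$: strictly negative for $s>0$ and strictly positive for $s<0$.

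It then remains to read off stability. For $s<0$ the Jacobian has a positive eigenvalue, so the corresponding equilibrium is unstable. For $s>0$ the transverse eigenvalue is negative while the neutral eigendirection is tangent to $\ell$, which consists entirely of equilibria; since $\ell$ is therefore the (trivial) center manifold, the transverse contraction forces nearby orbits to converge to a point of $\ell$ close to the given equilibrium, so these equilibria are Lyapunov stable (consistently with the heteroclinic picture of Fig.~2(a), where each elliptic arc runs from an $s<0$ equilibrium to the $s>0$ equilibrium on the same energy level $H=\tfrac12 s^2 Q$). I expect the only mildly delicate point to be this last step, since a bare zero eigenvalue is present and a naive Hartman--Grobman argument does not apply; but it is routine given that the neutral direction is precisely the equilibrium line, and the substance of the proof is the short Jacobian computation together with the already-recorded positive-definiteness of $\I$.
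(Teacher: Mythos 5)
Your proof is correct and follows essentially the same route as the paper: a linear stability analysis yielding a rank-one Jacobian with a zero eigenvalue along $\ell$ and transverse eigenvalue $-sE/D$, whose sign is fixed by the positive definiteness of $\I$ (your $Q$ is the paper's $E$, which the paper identifies as $(Z,-L_1,0)\,\I\,(Z,-L_1,0)^T>0$). Your added justification of Lyapunov stability for $s>0$ despite the zero eigenvalue (normal hyperbolicity of the equilibrium line) is a welcome refinement of a step the paper passes over silently.
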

\begin{proof}
We simply perform a linear stability analysis.
The matrix associated to the linearization about the equilibrium $v_1=-L_1s, \, \omega=Zs,$ is
\begin{equation*}
-s \, \left ( \begin{array}{cc} (L_2Z+ML_1)L_1    & (L_2Z+ML_1)Z   \\ (JZ+L_1L_2)L_1 &   (JZ+L_1L_2)Z \end{array} \right ).
\end{equation*}
It is seen that this matrix has eigenvalues $\lambda_1=0$ (corresponding to the continuum of equilibra along the
line  $\ell=\{L_1 \omega + Z v_1 =0\}$), and $\lambda_2=-sE$ with
\begin{equation}
\label{E:def_E}
E=JZ^2+2L_1L_2Z+ML_1^2= (Z,-L_1,0)\,\I\, (Z,-L_1,0)^T>0,
\end{equation}
since $\I$ is positive definite. Thus $\lambda_2$ is positive (negative) if $s<0$  ($s>0$), corresponding to the unstable (stable) direction.
\end{proof}

In particular, due to the above proposition and \eqref{rad}, the balanced elliptical sleigh depicted in fig 1 (a),
with $a>0$ and $0<\theta<\frac{\pi}{2}$, will have a limiting motion in the clockwise direction on a circle
of radius $r=\frac{2ma}{\rho \pi(A^2-B^2)\sin (2\theta)}$ as $t\rightarrow \infty$. Notice that
these conclusions on the qualitative asymptotic behavior of the system are independent of $b$, a feature that
is reminiscent of the classical unbalanced Chaplygin sleigh (see \cite{NeFu} and the discussion at the end of section 4). However, we shall see (Theorem 4.2) that
the distance between the centers of the circles does depend on $b$.
The conclusions of the proposition are also illustrated in figure 2.

%

\section{Explicit solution and asymptotic data}

The general solution of the reduced system \eqref{E:Working_Hydro_Sleigh_Equations} can be written in the form
\begin{equation} \label{sol}
\begin{split}
\omega(t) &= A ( \alpha \tanh (At) + \sigma c_{1} \sech (At)), \\
v_1 (t) & =A ( \beta \tanh (At) + \sigma c_{2} \sech (At)),\end{split}
\end{equation}
where the constants
\begin{gather} \label{pars}
\alpha = - \frac {D Z }{E}, \quad
\beta=\frac {D L_1}{E}, \quad
c_1 = \sqrt{D} \frac { ZL_2+ ML_1}{E}, \quad
c_2 =  \sqrt{D} \frac {L_1L_2 + ZJ }{E}, \\
E =JZ^2 +2 ZL_1L_2 +ML_1^2, \notag
\end{gather}
only depend on the components of the inertia tensor $\I$. Here $\sigma=\pm 1$ corresponding to the two different
branches of the trajectories on the phase portrait.

Notice that here the denominator $E>0$, as shown in \eqref{E:def_E} and that
the arbitrary constant $A\geq 0$ is related to the energy $H$ of the system by
\begin{equation*}
H=\frac{1}{2}\left ( \frac{D^2}{E} \right ) A^2.
\end{equation*}

\paragraph{The motion on the plane in the general case $Z\ne 0$.} In view of \eqref{recon}, the angle $\phi$ is calculated
by integrating the first expression in \eqref{sol}, which yields
\begin{gather}
\phi(t) = \phi_1 + \phi_2, \nonumber \\
\begin{aligned}
\phi_1 & = \int A \sigma c_1 \sech(At) \,  dt =2\sigma  c_1 (\arctan(e^{At}) -\pi/4), \\
\phi_2 & = \int A \alpha \tanh (At) dt= \alpha \ln (\cosh (At))+ \phi_0,
\end{aligned} \label{phi's}
\end{gather}
$\phi_0$ being an integration constant. The angles $\phi_1$ and $\phi_2$ are an odd and an even function of $t$
respectively. One can observe that they have quite different behavior:
\begin{equation} \label{lim_phi1}
\lim_{t\to \pm \infty}\phi_1= \pm \sigma\frac{c_1 \pi}{2}= \pm   \sigma \frac{\pi}{2} \frac{\sqrt{D} (Z L_2+M L_1)}{E},
\end{equation}
whereas,  as $t\rightarrow \pm \infty$, the angle $\phi_2$ asymptotically approaches the linear
function $l(t)= \pm A\alpha t +\phi_0 -\alpha \ln 2$.

The trajectory of the origin on the plane $(x,y)$ is then described by rather complicated integrals, which themselves do not say much about its properties.
However, it is natural to calculate the distance between the centers of the limit circles. To do this, we shall use

\begin{proposition} The centers of the limit circles coincide with the limit positions of the material point $C$, which in the body reference frame
$({\bf E_1} \, {\bf E_2})$ has fixed coordinates $(0,\beta/\alpha)=(0,-L_1/Z)$.
\end{proposition}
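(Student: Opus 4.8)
The plan is to compute the velocity in the space frame of the material point $C$ with body coordinates $(0, c)$, where $c = \beta/\alpha = -L_1/Z$, and to show that this velocity vanishes in the limit $t \to \pm\infty$. Since $C$ is carried along with the rigid body, its space position $\mathbf{x}_C(t)$ converges, and a point whose velocity tends to zero while the body performs an asymptotically circular motion must limit to the center of that circle. Concretely, if $\mathbf{r}_C(t) = (x(t),y(t)) + g(\phi(t))\,(0,c)^T$ denotes the space position of $C$, where $g(\phi)$ is the planar rotation by $\phi$, then differentiating and using the reconstruction equations \eqref{recon} gives $\dot{\mathbf{r}}_C = (v_1\cos\phi, v_1\sin\phi) + \dot\phi\, g(\phi)\,(-c,0)^T = g(\phi)\,(v_1, 0)^T + \omega\, g(\phi)\,(-c,0)^T = g(\phi)\,(v_1 - c\,\omega,\, 0)^T$ (the $\mathbf{E_2}$-component of the body-frame velocity of $C$ is $v_2 + 0\cdot\omega = 0$ by the constraint, so only the $\mathbf{E_1}$-component survives). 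Hence $|\dot{\mathbf{r}}_C| = |v_1 - c\,\omega|$.

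Next I would evaluate $v_1 - c\,\omega$ along the solution using \eqref{sol}: the $\tanh(At)$ terms contribute $A(\beta - c\,\alpha)\tanh(At)$, which vanishes identically since $c = \beta/\alpha$; the $\sech(At)$ terms contribute $A\sigma(c_2 - c\, c_1)\sech(At)$, which tends to $0$ as $t \to \pm\infty$ because $\sech(At) \to 0$. (In fact, one can note $c_2 - (\beta/\alpha)c_1$ need not vanish, but the $\sech$ decay is all that is needed.) Therefore $\dot{\mathbf{r}}_C(t) \to 0$, and since $\mathbf{r}_C(t)$ is the space position of a fixed body point while $(x(t), y(t))$ traces a trajectory asymptotic to a circle of radius $r = |L_1/Z|$ (Proposition on the case $Z \neq 0$), $\mathbf{r}_C(t)$ converges to a limit point which must be the center of that circle. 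Indeed, as $t\to\pm\infty$ the reduced state tends to an equilibrium on $\ell$ with $v_1 \to v_1^\pm$, $\omega \to \omega^\pm$ satisfying $v_1^\pm = c\,\omega^\pm$, and for a genuine equilibrium the body executes exact uniform circular motion about the point whose body-frame velocity is zero — which is precisely $C$.

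The one point requiring a little care — and the main (minor) obstacle — is justifying rigorously that a convergent fixed body point on an asymptotically circular trajectory converges to the limit circle's center, rather than merely arguing formally at the equilibrium. The clean way is to observe that on the equilibrium set $\ell$ the motion $(x,y)$ is genuinely a circle: there $\omega = \omega^\pm$ and $v_1 = v_1^\pm$ are constant, so $\dot\phi = \omega^\pm$, and integrating \eqref{recon} shows $(x,y)$ moves on a circle of radius $|v_1^\pm/\omega^\pm| = |L_1/Z|$ centered at the point with body coordinates $(0, v_1^\pm/\omega^\pm) = (0, -L_1/Z)$, i.e. at $C$. Combining this with the decay estimate $|\dot{\mathbf r}_C| = A|\sigma(c_2 - c\,c_1)|\sech(At) \to 0$ from the explicit solution \eqref{sol}, which controls the approach to the equilibrium regime, identifies the limit position of $C$ with the center of each limit circle. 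This completes the argument.
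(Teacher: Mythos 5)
Your proposal is correct and follows essentially the same route as the paper's own proof: you compute the space-frame velocity of the material point $C=(0,\beta/\alpha)$, observe that the $\tanh(At)$ contributions cancel by the choice of $c=\beta/\alpha$ so that $v_1-c\,\omega = A\sigma(c_2-c\,c_1)\sech(At)\to 0$ (in fact $c_2-(\beta/\alpha)c_1=\sqrt{D}/Z$, exactly the coefficient appearing in the paper), and conclude that $C$ must limit to the centers of the limit circles. Your extra care in justifying the last step (that a fixed body point with integrably decaying velocity converges, and that at an equilibrium the circular motion is centered at the unique body point with zero velocity) is a welcome refinement of the paper's terser ``since otherwise its limit velocity would not be zero.''
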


\begin{proof}
This can  be easily obtained from formula \eqref{rad}. Alternatively,
in view  of \eqref{sol}, \eqref{pars}, the $x$- and $y$-velocities of the point $C$ are
\begin{align*}
\dot x_C &= \left( v_1-\frac \beta \alpha \omega \right) \cos (\phi (t))= A \sigma \left ( \frac{\sqrt{D}}{Z} \right ) \sech (At) \, \cos(\phi (t)), \\
\dot y_C & = \left( v_1-\frac \beta \alpha \omega \right) \sin  (\phi (t))= A \sigma \left ( \frac{\sqrt{D}}{Z} \right ) \sech (At) \, \sin (\phi (t)),
\end{align*}
and both tend to zero as $t \to\pm \infty$. Hence $C$ coincides with the centers of the circumferences, since otherwise its limit velocity would not be zero.
\end{proof}

Setting $\phi=\phi_1+\phi_2$ as in \eqref{phi's}, we obtain the following expressions for the components of the vector
$(\Delta x_C, \Delta y_C)$ connecting the centers of the limit circumferences:
\begin{align*}
\Delta x_C & = \int_{-\infty}^\infty \dot x_C \, dt= \frac{\sigma\sqrt{D}}{Z}  \int_{-\infty}^\infty A\sech (At) ( \cos \phi_1 \cos\phi_2 - \sin \phi_1 \sin\phi_2 )\, dt, \notag \\
\Delta y_C & = \int_{-\infty}^\infty \dot y_C \, dt= \frac{\sigma\sqrt{D}}{Z}  \int_{-\infty}^\infty A\sech (At)  ( \sin \phi_1 \cos\phi_2 + \cos \phi_1 \sin\phi_2 )\, dt .
\end{align*}
Since $\phi_1$ is an odd and  $\phi_2$ is an even  function of time, the integrals are reduced to
\begin{align*}
\Delta x_C & =\frac{\sigma\sqrt{D}}{Z} \int_{-\infty}^\infty A\sech (At) \, \cos \phi_1 \cos\phi_2 \, dt,  \\
\Delta y_C & = \frac{\sigma\sqrt{D}}{Z}\int_{-\infty}^\infty A\sech (At)\, \cos \phi_1 \sin\phi_2 \, dt
\end{align*}
and, if we set $T=At$,
\begin{equation} \label{dels}
\begin{split}
\Delta x_C & =   \frac{\sigma \sqrt{D}}{Z} \int_{-\infty}^\infty  \sech (T) \,  \cos (2 c_1 (\arctan(e^T)-\pi/4 ) ) \,
\cos (\alpha \ln (\cosh T ) + \phi_0) \,\,  d T,  \\
\Delta y_C & =   \frac{\sigma \sqrt{D}}{Z} \int_{-\infty}^\infty \sech (T) \,  \cos ( 2c_1 (\arctan(e^T)-\pi/4 ) ) \,
\sin (\alpha \ln (\cosh T ) + \phi_0) \,\,  d T,\end{split}
\end{equation}
$c_1, \alpha$ being specified in \eqref{pars}.

It follows that, like the radii of the limit circumferences, {\it the distance $d$ between their centers does not depend on the energy, but only on the components
of the generalized inertia tensor}, as expected\footnote{
In view of the similarity of the reduced hydrodynamic Chaplygin sleigh and the nonholonomic Suslov problem, the distance $d$ can be regarded as an analog of the
angle between the axes of the limit permanent rotations of the Suslov rigid body in space.}.

The length scale of this distance is given by the ratio $\sqrt{D}/|Z|$ and
it depends parametrically on the dimensionless quantities $\alpha$ and $c_1$.
The  explicit dependence of the distance $d$ on the parameters $\alpha$ and $c_1$ is given by the following

\begin{theorem} \label{theorem_distance} The square of the distance $d$
between the limit circumferences is given by
\begin{equation}
\label{E:distance_sqd}
d^2=(\Delta x_C)^2 + (\Delta y_C)^2 = \frac{2\pi D}{Z^2} \left ( \frac{\alpha}{c_1^2+\alpha^2} \right ) \left ( \frac{\cosh ( \alpha \pi) - \cos (c_1 \pi )}{\sinh (\alpha \pi )} \right ).
\end{equation}

\end{theorem}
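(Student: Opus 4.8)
The plan is to package the two integrals of \eqref{dels} into a single complex integral, turn it into a classical Euler beta integral by the Gudermannian change of variables, and then read off its modulus with the reflection formula for $\Gamma$. To begin, I would write $\Delta x_C+i\,\Delta y_C=\frac{\sigma\sqrt D}{Z}\,e^{i\phi_0}\,I$, where
\[
I:=\int_{-\infty}^{\infty}\sech(T)\,\cos\!\bigl(2c_1(\arctan(e^{T})-\tfrac{\pi}{4})\bigr)\,e^{\,i\alpha\ln\cosh T}\,dT ,
\]
an absolutely convergent integral since $\sech T$ decays exponentially. Because $d^{2}=(\Delta x_C)^{2}+(\Delta y_C)^{2}=\frac{D}{Z^{2}}\,|I|^{2}$, the phase $e^{i\phi_0}$ drops out, which reconfirms that $d$ is independent of $\phi_0$. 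Next I would substitute $u=\operatorname{gd}(T):=2\arctan(e^{T})-\tfrac{\pi}{2}$, the Gudermannian function, which is a diffeomorphism of $\mathbb{R}$ onto $(-\tfrac{\pi}{2},\tfrac{\pi}{2})$ satisfying $du=\sech T\,dT$, $\cos u=\sech T$ (hence $\cosh T=1/\cos u$ and $\ln\cosh T=-\ln\cos u$), and $\arctan(e^{T})-\tfrac{\pi}{4}=u/2$. These identities collapse $I$ to
\[
I=\int_{-\pi/2}^{\pi/2}\cos(c_1u)\,(\cos u)^{-i\alpha}\,du .
\]

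The core step is the classical beta integral
\[
\int_{-\pi/2}^{\pi/2}(\cos u)^{a}\,e^{ibu}\,du=\frac{\pi\,\Gamma(a+1)}{2^{a}\,\Gamma\!\bigl(1+\tfrac{a+b}{2}\bigr)\,\Gamma\!\bigl(1+\tfrac{a-b}{2}\bigr)} ,
\]
which holds for real $a>-1$ and, by analytic continuation, for complex $a$ with $\Re a>-1$; I would apply it with $a=-i\alpha$. Since its right-hand side is even in $b$, writing $\cos(c_1u)=\tfrac12(e^{ic_1u}+e^{-ic_1u})$ makes the two contributions coincide, giving
\[
I=\frac{\pi\,2^{\,i\alpha}\,\Gamma(1-i\alpha)}{\Gamma\!\bigl(1+\tfrac{c_1-i\alpha}{2}\bigr)\,\Gamma\!\bigl(1-\tfrac{c_1+i\alpha}{2}\bigr)} .
\]

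To finish, since $\alpha,c_1\in\mathbb{R}$ conjugation merely sends $i\mapsto-i$, so
\[
|I|^{2}=\frac{\pi^{2}\,\Gamma(1-i\alpha)\Gamma(1+i\alpha)}{\Gamma\!\bigl(1+\tfrac{c_1-i\alpha}{2}\bigr)\Gamma\!\bigl(1-\tfrac{c_1-i\alpha}{2}\bigr)\,\Gamma\!\bigl(1+\tfrac{c_1+i\alpha}{2}\bigr)\Gamma\!\bigl(1-\tfrac{c_1+i\alpha}{2}\bigr)} .
\]
I would then apply $\Gamma(1+z)\Gamma(1-z)=\pi z/\sin(\pi z)$: with $z=i\alpha$ the numerator becomes $\pi\alpha/\sinh(\pi\alpha)$, and with $z=w:=\tfrac{c_1-i\alpha}{2}$ together with $z=\bar w$ the four denominator factors give $\pi^{2}|w|^{2}/|\sin\pi w|^{2}$, where $|w|^{2}=(c_1^{2}+\alpha^{2})/4$ and a short computation yields $|\sin\pi w|^{2}=\sin^{2}\tfrac{\pi c_1}{2}+\sinh^{2}\tfrac{\pi\alpha}{2}=\tfrac12\bigl(\cosh\pi\alpha-\cos\pi c_1\bigr)$. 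Collecting the pieces gives
\[
|I|^{2}=2\pi\left(\frac{\alpha}{c_1^{2}+\alpha^{2}}\right)\left(\frac{\cosh(\pi\alpha)-\cos(\pi c_1)}{\sinh(\pi\alpha)}\right),
\]
and multiplying by $D/Z^{2}$ gives \eqref{E:distance_sqd}. The Gudermannian identities and the final Gamma-function algebra are routine; the one real point is to recognize that the ungainly integrand in \eqref{dels} is, after the substitution, nothing but $(\cos u)^{-i\alpha}\cos(c_1u)$, so that a single Euler beta integral with purely imaginary exponent does all the work — one need only check that this integral makes sense (the integrand is bounded) and that the beta-integral formula remains valid for $\Re a>-1$.
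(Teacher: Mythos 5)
Your argument is correct, and it reaches the paper's formula by a noticeably cleaner route than the one in the text. The paper first disposes of the constant $\phi_0$ by fixing $\phi_0=\alpha\ln 2$, then performs the chain of substitutions $z=e^T$, $u=\arctan z$, $w=2u$ to land on the pair of Lobachevsky--Watson integrals $\int_0^\pi\cos(c_1w)(\sin w)^{\nu-1}dw$ and $\int_0^\pi\sin(c_1w)(\sin w)^{\nu-1}dw$, treats the case $c_1=0$ separately as a warm-up, and only then does the Gamma-function algebra. You instead observe that $\phi_0$ contributes only a unimodular phase to $\Delta x_C+i\Delta y_C$ (so no normalization is needed), and that the composite of the paper's substitutions is exactly the Gudermannian $u=\operatorname{gd}(T)$, whose identities $du=\sech T\,dT$, $\cos u=\sech T$, $\arctan(e^T)-\pi/4=u/2$ collapse everything in one step to the single Cauchy beta integral $\int_{-\pi/2}^{\pi/2}\cos(c_1u)(\cos u)^{-i\alpha}du$. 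From there the Gamma reflection computation is essentially the same as the paper's, and your evaluation $|\sin\pi w|^2=\sin^2(\pi c_1/2)+\sinh^2(\pi\alpha/2)=\tfrac12(\cosh\pi\alpha-\cos\pi c_1)$ matches the paper's final simplification. What your version buys is economy: one substitution and one classical formula instead of three substitutions and two formulae, no case split at $c_1=0$, and a transparent reason why $d$ is independent of $\phi_0$. The only point that genuinely requires care --- and you flag it correctly --- is the validity of the beta integral for the purely imaginary exponent $a=-i\alpha$, which follows by analytic continuation in $a$ on $\Re a>-1$ since the integral converges absolutely and locally uniformly there; this is the same continuation the paper implicitly invokes when it sets $\nu=1\pm i\alpha$ in formulae stated for $\Re\nu>0$.
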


\begin{proof} Assume without loss of generality that $\phi_0=\alpha \ln 2$ in \eqref{phi's} and \eqref{dels}.
This restriction causes a rotation of the vector $(\Delta x_C, \Delta y_C)$, but does not
affect its length.

First, for simplicity, set $ c_1=0$, that is, $( {\I^{-1}} )_{13}=0$. 
Then, in view of \eqref{phi's}, $\phi_1\equiv 0$, and the integrals \eqref{dels} give
\begin{align}
\Delta x_C + i \Delta y_C & =  2 \frac{\sigma \sqrt{D}}{Z} \int_{-\infty}^\infty \frac {\exp(i \,\alpha \ln (e^{T}+e^{-T} )) } { e^{T}+e^{-T} } \, d T =\{z=e^T \}
\notag \\
 &= 2 \frac{\sigma\sqrt{D}}{Z} \int_{0}^\infty  \frac {\exp(i \, \alpha \ln (z+1/z  )) } { z^2+ 1} \, d z =\{u= \arctan z \}  \notag \\
&= 2 \frac{\sigma\sqrt{D}}{Z} \int_{0}^{\pi/2}  \exp  (i \, \alpha  [\ln (1+ \tan^2 u)- \ln (\tan u) ] ) \, d u \notag \\
& = 2 \frac{\sigma\sqrt{D}}{Z} \int_{0}^{\pi/2} \exp \left (i \, \alpha  \, \ln \left( \frac{\sec^2 u}{\tan u}\right)  \right ) \, d u \notag \\
&= 
2 \frac{\sigma\sqrt{D}}{Z} \int_{0}^{\pi/2}  (\cos u)^{-i \alpha} (\sin u)^{-i  \alpha} \, du \label{int_beta}
\end{align}
and, similarly,
\begin{equation} \label{int_beta1}
\Delta x_C - i \Delta y_C = 2 \frac{\sigma\sqrt{D}}{Z} \int_{0}^{\pi/2} (\cos u)^{i \alpha} (\sin u)^{i  \alpha} \, du .
\end{equation}

The last integrals have the form of the Euler {\it Beta}-function (see e.g., \cite{Erdelyi})
\begin{equation}
\label{E:def_beta}
B(x,y)= \int_0^1 t^{x-1} (1-t)^{y-1} dt= 2 \int_{0}^{\pi/2} (\cos u)^{2x-1} (\sin u)^{2y-1} du = \frac{\Gamma(x) \Gamma(y) }{\Gamma (x+y)},
\end{equation}
$\Gamma (\cdot)$ being the Euler {\it Gamma}-function with the properties
\begin{equation}\label{G_properties}
\Gamma \left(\frac 12 - z \right) \Gamma \left(\frac 12 + z \right)= \frac{\pi}{\cos(\pi z)}, \quad
\Gamma ( - z ) \Gamma (z) = - \frac{\pi}{z\sin (\pi z)}, \quad \Gamma (z+1)=z\Gamma (z), \quad z\in {\mathbb C} . 
\end{equation}
Then \eqref{int_beta}, \eqref{int_beta1} read
$$
\Delta x_C \pm i \Delta y_C = 2 \frac{\sigma\sqrt{D}}{Z} \frac 12 B ( 1/2 \mp i \alpha/2,  1/2 \mp i \alpha/2)
 = \frac{\sigma\sqrt{D}}{Z} \frac { \Gamma^2 (1/2 \mp i \alpha/2 )}{\Gamma (1 \mp i \alpha )}
$$
and, in view of \eqref{G_properties}, the square of the distance is
\begin{align}
d^2= & (\Delta x_C)^2+(\Delta y_C)^2  = \frac{D}{Z^2}
\frac{ \left[ \Gamma \left(\frac 12 - i \frac{\alpha}{2} \right) \Gamma \left(\frac 12 + i \frac{\alpha}{2} \right)\right]^2 }
 {\Gamma(1+i\alpha) \Gamma(1-i\alpha) } = \frac{D}{Z^2} \frac{\pi/\cos^2(\pi i \alpha/2) }{i \alpha/ \sin( \pi i \alpha) } \notag \\
 &= \frac{D}{Z^2} \frac{ 2 \pi \sin (\pi i \alpha )  }{i \alpha (1+ \cos (\pi i \alpha ) )}
= \frac{2 \pi D }{Z^2\alpha} \left ( \frac{\sinh (\pi \alpha)}{1+ \cosh (\pi \alpha)} \right )
= \frac{2 \pi D }{Z^2\alpha} \left ( \frac{ \cosh (\pi \alpha) -1}{\sinh (\pi \alpha)} \right ),  \notag
\end{align}
which is real and positive for real $\alpha$.

In the general case $c_1\ne 0$, under the same changes of variables, the integrals \eqref{dels} yield
\begin{align}
 \Delta x_C \mp i \Delta y_C &= 2 \frac{\sigma\sqrt{D}}{Z} \int_{0}^{\pi/2} \cos(2 c_1 (u-\pi/4))\, (\cos u)^{\pm i \alpha} (\sin u)^{\pm i \alpha} \, du  \notag \\
 &=2 \frac{\sigma\sqrt{D}}{Z}(2^{\mp i\alpha}) \int_{0}^{\pi/2} \cos(2 c_1 (u-\pi/4))\, (\sin (2u))^{\pm i \alpha} \, du =\, \{w=2u \} \notag \\
  &= \frac{\sigma\sqrt{D}}{Z}(2^{\mp i\alpha}) \int_{0}^{\pi} \cos( c_1 (w-\pi/2))\, (\sin w)^{\pm i \alpha} \, dw  \notag \\
   &= \frac{\sigma\sqrt{D}}{Z}(2^{\mp i\alpha}) \, \left [ \,  \cos \left ( \frac{c_1 \pi}{2} \right ) \int_{0}^{\pi} \cos( c_1 w )\, (\sin w)^{\pm i \alpha} \, dw \right . \,  \notag \\ & \qquad \qquad \qquad \qquad \qquad \qquad  \qquad  + \,   \left . \sin \left ( \frac{c_1 \pi}{2} \right ) \int_{0}^{\pi} \sin ( c_1 w )\, (\sin w)^{\pm i \alpha} \, dw \, \right ]
 .  \label{gen_ints}
\end{align}

The last two integrals in the right hand side can be calculated in terms of the Beta function by applying the general formulae \cite{GR}\footnote{
The original references for these formulae are \cite{LO} and \cite{WA}.}
\begin{equation*}
\begin{split}
\int_{0}^{\pi} \cos( c_1 w )\, (\sin w)^{\nu-1} \, dw &= \frac{\pi \cos \left ( \frac{c_1 \pi}{2} \right ) }{2^{\nu-1}\nu \, B \left( \frac{\nu +c_1 +1}{2}, \frac{\nu - c_1 +1}{2} \right )} \, , \\
\int_{0}^{\pi} \sin( c_1 w )\, (\sin w)^{\nu-1} \, dw &= \frac{\pi \sin \left ( \frac{c_1 \pi}{2} \right ) }{2^{\nu-1}\nu \, B \left( \frac{\nu +c_1 +1}{2}, \frac{\nu - c_1 +1}{2} \right )},
\end{split}
\end{equation*}
$\nu$ being a complex number with a positive real part. Then, after setting $\nu = 1\pm i\alpha$, \eqref{gen_ints} gives
\begin{align*}
 \Delta x_C \mp i \Delta y_C &= \frac{\sigma\sqrt{D}}{Z} \frac{4^{\mp i\alpha}\pi}{(1\pm i\alpha) \, B \left( 1+ \frac{c_1 \pm i\alpha }{2}, 1+ \frac{-c_1 \pm i\alpha }{2} \right )}.
 \end{align*}
Therefore, using \eqref{E:def_beta}, \eqref{G_properties}, we find that the square
of the distance is given by:
\begin{align*}
d^2 = (\Delta x_C)^2 + ( \Delta y_C)^2 &= \frac{D \pi ^2}{Z^2(1+\alpha^2)} \left [\frac{1}{ B \left( 1+ \frac{c_1+ i\alpha }{2}, 1+ \frac{-c_1+ i\alpha }{2} \right )\, B \left( 1+ \frac{c_1- i\alpha }{2}, 1+ \frac{-c_1- i\alpha }{2} \right )} \right ] \\
&=  \frac{D \pi ^2}{Z^2(1+\alpha^2)}  \left [\frac{\Gamma (2+i\alpha)\, \Gamma (2-i\alpha)}{ \Gamma \left( 1+ \frac{ c_1+i\alpha }{2} \right ) \, \Gamma \left ( 1+ \frac{-c_1+ i\alpha }{2}
\right )\, \Gamma \left( 1+ \frac{c_1- i\alpha }{2} \right ) \, \Gamma \left ( 1+ \frac{- c_1-i\alpha }{2} \right )} \right ] \\
&=  \frac{D \pi ^2}{Z^2(1+\alpha^2)}  \left [ \frac{\alpha^2(1+\alpha^2) \, \Gamma (i\alpha) \, \Gamma (-i\alpha)}{\left ( \frac{ c_1^2+\alpha^2 }{4}\right )^2 \, \left (  \Gamma \left(  \frac{ c_1+i\alpha }{2} \right ) \,  \Gamma \left( - \frac{c_1 + i\alpha}{2} \right )  \right ) \,  \left (  \Gamma \left(  \frac{ c_1-i\alpha }{2} \right ) \,  \Gamma \left( - \frac{c_1 - i\alpha}{2} \right )  \right )} \right ]  \\
&=  \frac{D \pi ^2\alpha^2}{Z^2}  \left [ \frac{ \,\sin   \left ( \pi \left (  \frac{ c_1+i\alpha }{2} \right ) \right )\, \sin   \left ( \pi \left (  \frac{ c_1- i\alpha }{2} \right ) \right )  }{ - i \alpha \pi \, \left ( \frac{ c_1^2+\alpha^2 }{4}\right ) \sin (  i \alpha \pi) } \right ] \\
&=  \frac{4\pi D }{Z^2} \left ( \frac{\alpha}{c_1^2+\alpha^2} \right )\left (
\frac{\sin^2 \left (\frac{c_1\pi}{2} \right ) \cosh^2\left (\frac{\alpha\pi}{2} \right ) + \cos^2 \left (\frac{c_1\pi}{2} \right ) \sinh^2\left (\frac{\alpha\pi}{2} \right )}{\sinh(\alpha \pi)} \right ),
 \end{align*}
and the last expression simplifies to \eqref{E:distance_sqd}.
\end{proof}

We mention that the formula of Theorem \ref{theorem_distance} is in perfect correspondence with numerical tests.

\paragraph{The case $Z=0, \; L_1\neq 0$.} The condition $Z=0$  corresponds to the absence of the fluid, or to the case when the blade is parallel to one of
two specific perpendicular directions in the body frame with the following property: if the solid is set in motion parallel to one of these, without rotation,
it will continue to move in this manner. For the elliptical sleigh, the two directions are precisely the principal axes of the ellipse.

In this case the system reduces to the classical Chaplygin sleigh, whose motion
on the plane was described in detail in \cite{NeFu}. Namely, this implies $\alpha=0$ and the line of equilibria on the phase plane $(\omega, v_1)$ is the axis
$\omega=0$. The trajectory of
the contact point on the plane in this case necessarily has a return point and, in view of \eqref{recon},  is given by the rather complicated integrals.
However, the limit behaviors of the sleigh are straight-line uniform motions and, according to \eqref{lim_phi1},
the angle between the limit lines is\footnote{Here $\Delta \phi$ is measured from the line of asymptotic straight line motion in the past
to the line of asymptotic straight line motion in the future in the trigonometric sense.}
$$
\Delta \phi = \sigma c_1 \pi = \sigma \pi \frac{\sqrt{D} (M L_1)}{E}.
$$
 Notice that $\Delta \phi$ and $\omega$ have the same sign (the latter does not change throughout the motion).

In the absence of the fluid, this expression becomes
$$\Delta \phi= \sigma \pi \frac{\sqrt{m ({\cal I}+ m a^2)}}{ma},$$
which does not depend on $b$. Figure 3 shows the trajectory of the sleigh in this case for different values of $\Delta \phi$ (see also \cite{NeFu}).
Note that $\left [ \frac{|\Delta \phi|}{2\pi} \right ]$ is the number of ``loops" that the sleigh performs in its transition between
the limit straight line motions.

\begin{figure}[h]
\centering
\subfigure[$0 < \Delta \phi < 2\pi$ ]{\includegraphics[width=4.5cm]{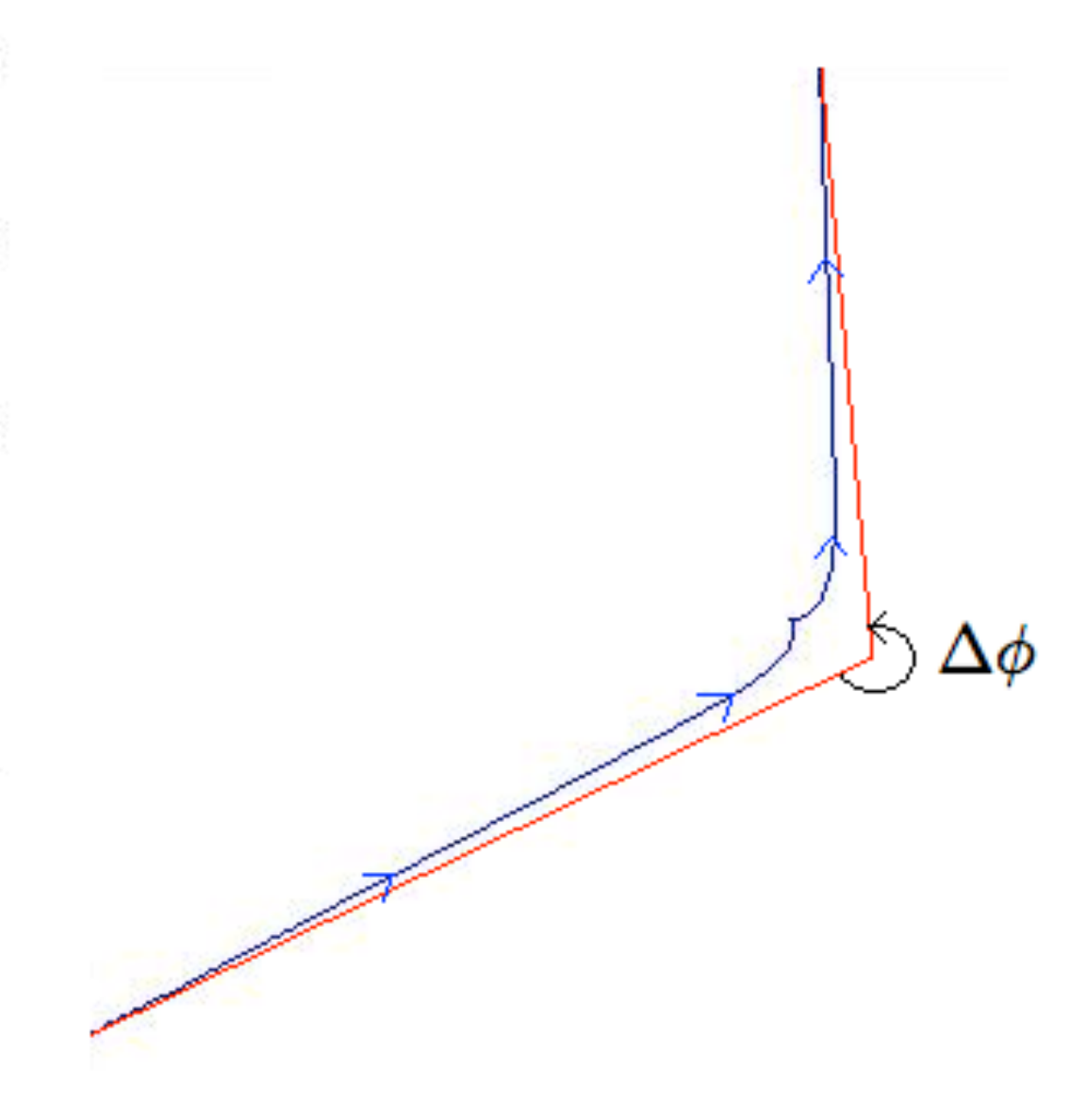}}  \; \;
\subfigure[$2\pi < \Delta \phi < 4\pi$ ]{\includegraphics[width=4.5cm]{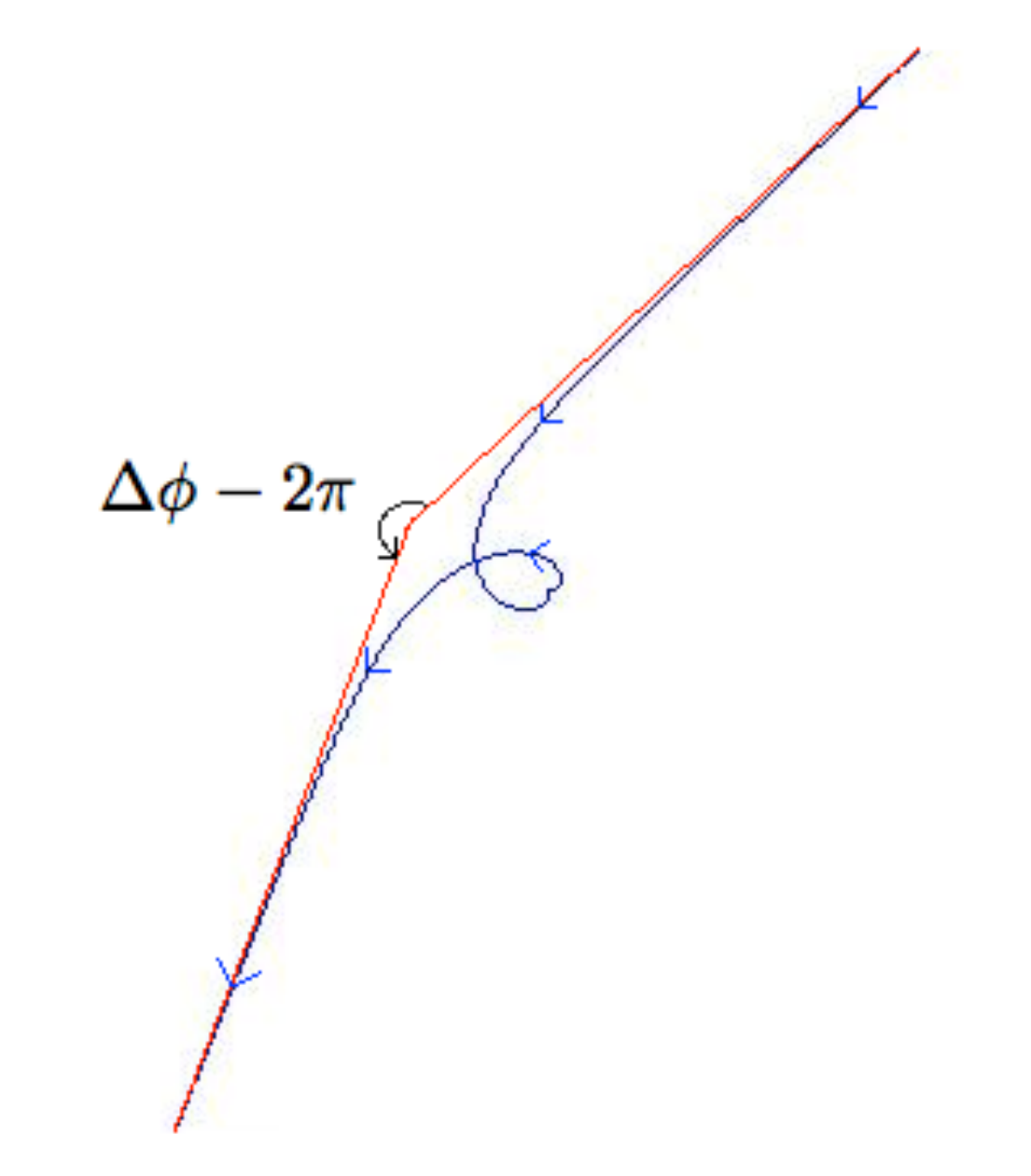}} \; \;
\subfigure[$4\pi < \Delta \phi < 6\pi$ ]{\includegraphics[width=6cm]{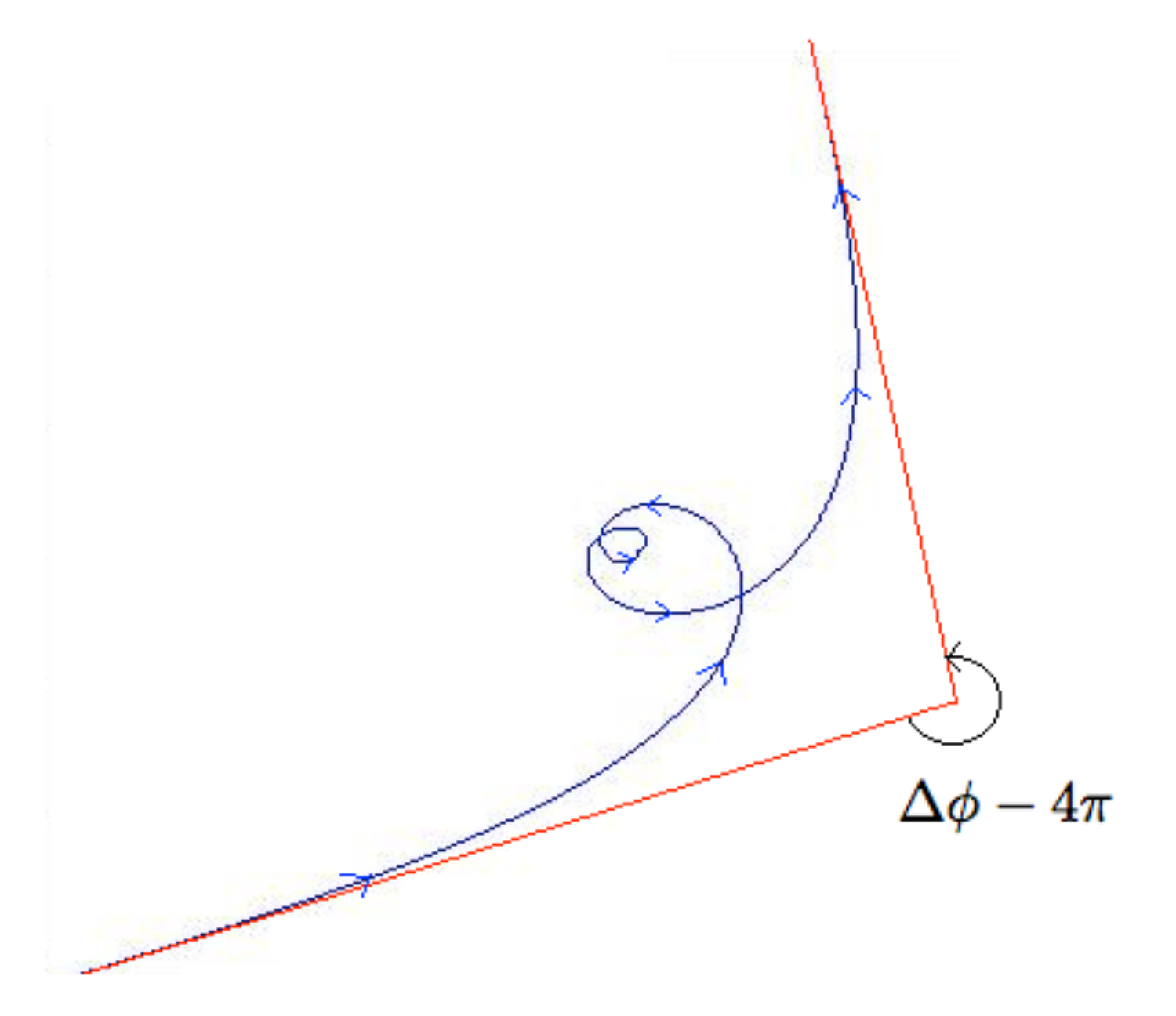}}
\caption{\small{Trajectory of the contact point of the sleigh on the plane in the case $Z=0$ for different values of $\Delta \phi$.
In all these cases $\omega>0$, so $\phi$ is an increasing function of
time and
the sleigh turns counterclockwise.}}
\end{figure}

%

\section*{Conclusions and further work}

A new series of examples of nonholonomic systems has been presented.
These are of interest from the point of view of applications in the design
of underwater vehicles and mechanisms, since, as we have mentioned, the constraint can be interpreted as a simple model for a fin.

From the mathematical point of view, our systems provide a motivation to study
the problem of nonholonomic geodesics on the group $SE(n)$ with a general left invariant kinetic energy metric.

The hydrodynamical version of the Chaplygin sleigh that has been considered, provides a new example
of a simple, integrable nonholonomic system with an interesting asymptotic behavior.
Its extensive analysis that we have presented can be of interest in the design of control mechanisms, see \cite{Osborne-Zenkov}.

It should be emphasized that we have only considered the case of zero circulation of the fluid.
Our preliminary studies show that in the presence of circulation the corresponding equations of motion are no longer of EPS type.
However, some of the features of the asymptotic motion remain.
In particular, for certain initial conditions, one has asymptotic evolution from one circular motion to another, as before,  but the radii of the limit circles are not the same.

 In this spirit, another interesting problem to consider is to couple the motion
 of the nonholonomic sleigh with point vortices. Such a  problem (without nonholonomic constraints) has received
 interest in the last years, see for example \cite{Ra, Bor_Mam_Ram, Va_et_al}.

 We hope to report with progress on the problems described above
 in the near future.

\paragraph{Acknowledgments} { \small Yu.F. acknowledges the
support of MCyT-FEDER grant MTM2006-00478 and grant MTM2009-12670 of the Spanish Ministry of Science and Technology and is grateful to the School of Mathematics of
Ecole Politechnique F\'ed\'erale de Lausanne for its hospitality during his recent stay there.
\newline
Luis G.N  acknowledges the hospitality at the Department de Matem\'atica Aplicada I, at UPC Barcelona for
his recent stay there.
\newline
We are also grateful to Irina Kukk for her help with checking the derivation of the formula of
Theorem \ref{theorem_distance} and to Maria Pzybulska,  Joris Vankerschaver, and Dmitry Zenkov for useful discussions.}

\end{document}